\newtheorem{Theorem}{Theorem}
\newtheorem{Proposition}{Proposition}
\def\@makecaption#1#2{%
	\vskip\abovecaptionskip
	\sbox\@tempboxa{#1 #2}%
	{\bfseries #1} #2\par
	\vskip\belowcaptionskip}
\begin{document}
	\begin{sloppypar}	
		
		\begin{frontmatter}		
			\title{Localized excitation on the Jacobi elliptic periodic background for the (n+1)-dimensional generalized Kadomtsev-Petviashvili equation}
			
            \author[mymainaddress]{Jiabin Li}
			\author[mymainaddress2]{Yunqing Yang \corref{mycorrespondingauthor}}
			\cortext[mycorrespondingauthor]{Corresponding author}
			\ead{yqyang@amss.ac.cn}	
			\author[mymainaddress3]{Wanyi Sun}
			\author[mymainaddress4]{Yuqian Wang}
			\address[mymainaddress]{School of Physics, Northwest University, Xi’an 710127, China}
			\address[mymainaddress2]{School of Science, Zhejiang University of Science and Technology, Hangzhou, 310023, China}
			\address[mymainaddress3]{School of Mathematical Sciences, Key Laboratory of Mathematics and Engineering Applications (Ministry of Education) $\&$ Shanghai Key Laboratory of PMMP, Shanghai, 200241, China}
           \address[mymainaddress4]{School of Information Science, Zhejiang Ocean University, Zhoushan, 316022, China}

			\begin{abstract}
				In this paper, the linear spectral problem, which associated with the (n+1)-dimensional generalized Kadomtsev-Petviashvili (gKP) equation,
				with the Jacobi elliptic function as the external potential is investigated based on the Lam\'{e} function, from which some novel local nonlinear wave solutions
				on the Jacobi elliptic function have been obtained by Darboux transformation, and the corresponding dynamics have also been discussed.
				The degenerate solutions of the nonlinear wave solutions on the Jacobi function background for the gKP equation are constructed by
				taking the modulus of the Jacobi function to be 0 and 1. The findings indicate that there can be various types of nonlinear wave solutions with different
				ranges of spectral parameters, including soliton and breather waves. Furthermore, the interplay between nonlinearity and dispersion is found to have observable effects on the propagation dynamics of breather waves.
These results will be useful for elucidating and predicting nonlinear phenomena in related physical fields, such as fluid mechanics and physical ocean.
			\end{abstract}
			
			\begin{keyword}
				Periodic-background solution\sep Spectral analysis\sep Linear spectral problem \sep Lam\'{e} function method\sep Darboux transformation
			\end{keyword}
			
		\end{frontmatter}
		
		\section{Introduction}	\label{intro}
		Nonlinear science, renowned for its ability to unravel complex mysteries, is crucial in elucidating a range of scientific phenomena. In recent years, the study of nonlinear waves and their underlying dynamics arose in diverse physical fields, such as fluid mechanics~\cite{AD-1988}, nonlinear optics~\cite{AH-2002, BA-2015}, Bose-Einstein condensate~\cite{LP-SS-2016}, plasma physics~\cite{HB-2011}, deep water~\cite{PAM-HBB-2002} and even financial markets\cite{ZY-2011}, have received increasing attentions. Generally, the nonlinear evolution equation can serve as a standard prototype for the characterization of nonlinear waves.
		Especially, the integrable equation, which can be associated with the linear spectral problem and can be solved by many effective techniques such as Hirota bilinear method \cite{RH-2004}, inverse scattering transformation method \cite{MB-FP-2002, AG-AD-2014, GQZ-ZYY-2020}, B\"acklund transformation (BT) method \cite{CR-WK-2002, JMT-SFT-MJX-2016}, Darboux transformation (DT) method \cite{MVB-SMA-1991, XYW-YQY-2015, XYW-ZY-2020, CLT-ZWW-2021}, algebro-geometric method \cite{FG-2003, FG-2008}, and the newly favored method of deep learning \cite{JCP-YC-2023, SNL-YC-2022, JHL-JCC-2022}, has always been used as the fundamental  governing equation for describing the propagation and dynamics of the nonlinear wave.
		
		However, the majority of current research primarily concentrates on various nonlinear wave solutions on the constant background, such as the extensively studied soliton, periodic wave, breather and rogue wave solutions are all based on zero and plane wave backgrounds. It is quite natural that nonlinear waves on the nonconstant wave background and their corresponding underlying dynamics should also be discussed, which is also the main motivation of our paper. Of course, there have been many scholars who have paid attention to this issue.
		Recently, Kedziora and Chin et al. investigated nonlinear wave solutions on the periodic wave backgrounds based on DT and numerical method~\cite{DJK-AA-2014, SAC-OAA-2017}; Lou and Lin derived the interaction solutions between soliton and periodic wave solutions by nonlocal symmetry and dressing method~\cite{SYL-XPC-2014, JL-XWJ-2018}; Chen and Pelinovsky et al. constructed many nonlinear wave solutions on the Jacobi elliptic function periodic wave backgrounds for some classical Nonlinear Schr\"odinger (NLS) type equations~\cite{JBC-DE-2018, JBC-DE-2019, JBC-RSZ-2020, JBC-DEP-2021, JBC-DEP-REW-2020, HQZ-FC-ZJP-2021, HQZ-FC-2021}  by the nonlinearization technique~\cite{CWC-XGG-1990} and DT; Feng and Ling et al. discussed multi-breather and high-order rogue waves on the elliptic function background by DT method and theta functions~\cite{BFF-LML-DAT-2020}; Using the DT and Baker-Akhiezer function method, Li and Geng constructed explicit solutions on the periodic backgrounds \cite{RML-XGG-2020, RL-2023, XGG-RML-2022, RML-JRG-XGG-2023}; Hoefer et al. gave the  bright and dark breather solutions on the cnoidal wave background by DT method \cite{MAH-AM-2023}; Chen et al. considered the rogue wave on the periodic background by PINN deep learning method~\cite{WQP-JCP-2022}; We also discussed some nonlinear wave solutions on the periodic background by DT method~\cite{YQY-HHD-2021, JBL-YQY-2024, XMX-YQY-2024}.
		
		For several decades, physics has been concerned with nonlinear local excitations on the nonconstant background, especially periodic wave background. In fluid mechanics, a new type of solitary wave motion in incompressible fluids of non-uniform density has been studied by Davis and Acrivos in both experiments and theory \cite{RED-AA-1967}. Following this, in the study of internal waves, Farmer and Smith observed solitary waves followed by a sequence of periodic waves~\cite{DMF-JDS-1980}. Xu and Chabchoub et al. studied the modulation instability and associated rogue breathers on periodic wave background experimentally \cite{GX-AC-2020}. At the same time, the related nonlinear wave theories and laboratory experiments have shown that nonconstant background can better describe physical reality, in which a very typical background is periodic wave, initially appeared as solutions to the conservation laws governing shallow water and subsequently manifested in the domains of ion plasmas \cite{JSB-VN-1994}, nonlinear optics \cite{JLS-GJS-1997}, and elastic solids \cite{CM-JS-JR-2019}. These mathematic-physical results lead to an increased interest in the study of nonlinear wave solutions on the periodic wave background.
		
		To describe more complex and rich physical phenomena in real physical situations or experiments, high-dimensional integrable systems are typically used as prototypes, which possess richer mathematical structures and dynamic properties, although they are difficult to handle \cite{CLT-ZWW-2019, YQY-XW-2015}.
		Therefore, it is a very natural and meaningful problem to study nonlinear waves on the periodic wave background for the high-dimensional integrable systems with physical significance. Under this motivation, this paper concerns the (n+1)-dimensions generalized Kadomtsev-Petviashvili (gKP) equation \cite{GQX-AMW-2023}:
		\begin{equation}\label{n+1kp}
			\begin{aligned}
				\left(u_t+6\beta uu_{x_1}+\beta u_{x_1x_1x_1}\right)_{x_1}+\gamma u_{x_2x_2}+\sum_{i=2}^n\sigma_iu_{x_1x_i}=0,
			\end{aligned}
		\end{equation}
		which is associated with the following Lax pair:
		\begin{subequations}\label{lp}
			\begin{align}
				&L\varphi=\lambda\varphi,\label{lp1}\\
				&\varphi_t=T\varphi,\label{lp2}
			\end{align}
		\end{subequations}
		in which, the linear operators $L$ and $T$ are given as
		\begin{subequations}\label{lax pair}
			\begin{align}
				L&=\partial_{x_1}^2+\delta\partial_{x_2}+u, \label{1.3a} \\
				T&=\left(\frac\gamma\delta+\beta\delta\right)\partial_{x_1x_2}-\left(2\beta u+4\lambda\beta\right)\partial_{x_1}+\beta u_{x_1}\nonumber\\
				&\quad+\frac\gamma\delta \partial^{-1}_{x_1}{u_{x_2}}-\sum_{i=2}^n\sigma_i\partial_{x_i},
			\end{align}
		\end{subequations}
		where $n\geq2$, $\lambda$ is the spectral parameter, $\beta$, $\delta$, $\gamma$, $\sigma_i$ are constant parameters and $\delta^2=\gamma/(3\beta)$. The function $u$, with respect to spatial variables $x_1,x_2,\cdots,x_n$ and time variable $t$, denotes the nonlinear wave, and the subscripts represent the partial derivatives.
		Equation \eqref{n+1kp} contains some famous physical equations as special reductions, such as the KPI and KPII equation
		\begin{equation}
			(u_t+6uu_{x_1}+u_{x_1x_1x_1})_{x_1}\pm3u_{x_2x_2}=0,
		\end{equation}
		can be derived from equation \eqref{n+1kp} under the parameters $\beta=1,\,\sigma_i=0,\,\gamma=\pm3$, and the (3+1)-dimensional generalized KP equation\cite{LLH-YFY-YC}
		\begin{equation}
			(u_t+6\beta uu_{x_1}+\beta u_{x_1x_1x_1})_{x_1}+\gamma u_{x_2x_2}+\sigma_{2}u_{x_1x_2}+\sigma_{3}u_{x_1x_3}=0.
		\end{equation}
		can be reduced from equation \eqref{n+1kp} by taking $n=3$.
		
		Previous studies of the nonlinear waves on the periodic wave background for the KP family equation mainly focus on Weierstrass elliptic function wave.  Krichever investigated the theory of elliptic solitons for the KP equation and the corresponding Calogero-Moser system~\cite{IMK-1994}; Brezhnev studied the solutions for Lam\'{e} equation  \eqref{lameeq} by Hermite Ansatz method \cite{YVB-2004}; Zhang and Li proposed the bilinear framework for elliptic soliton solutions, which are composed of the Lam\'{e}-type plane wave factors~\cite{XL-DJZ-2020}. Nijhoff, Sun, and Zhang established an infinite family of solutions in terms of elliptic functions of the lattice Boussinesq systems~\cite{FWN-YYS-2023}; However, to the best of our knowledge, the nonlinear waves on the Jacobi elliptic function periodic wave background for the (n+1)-dimensional gKP equation \eqref{n+1kp} remains unexplored, under which the linear spectral problem \eqref{lp} can be connected with the  Lam\'{e} equation in Jacobi form \cite{ELI-1956},
		\begin{equation}\label{lameeq}
			\frac{d^2\varphi}{dx^2}+\left(A+m(m+1)k^2\mathrm{sn}^2(x,k)\right)\varphi=0
		\end{equation}
		where $\varphi$ is a function, $x$ is a variable, $A$ is a constant, $m$ is a parameter, $k$ is a elliptic modulus, and $\mathrm{sn}(x,k)$ is a Jacobi elliptic function.
		
		This paper is structured as follows. In Sec. \ref{sec2}, the $N$-th DT for the (n+1)-dimensional gKP equation \eqref{n+1kp} is constructed directly. In Sec. \ref{sec3}, the Jacobi elliptic periodic seed solution of the gKP equation \eqref{n+1kp} is given using the Jacobi elliptic function expansion method. In Sec. \ref{sec4}, a general solution of the linear spectral problem with Jacobi function coefficients is constructed, and the spectrum is analyzed. In Sec. \ref{sec5}, various nonlinear wave solutions on the Jacobi elliptic periodic function for the (n+1)-dimensional gKP equation \eqref{n+1kp} are derived via one-time and two-times DT. In Sec. \ref{sec6}, degenerate solutions have been studied by modulus $k$ tending to $0$ or $1$, where soliton solutions and interaction solutions can be obtained. Finally, conclusions and future work are summarized in Sec. \ref{sec7}.
		
		\section{Lax pair and Darboux transformation for the gKP equation} \label{sec2}	
		
		In this section, the basic form and $N$-th DT for the (n+1)-dimensions gKP equation \eqref{n+1kp} are given directly in the following theorem.
		
		\begin{Theorem}		\label{Th 2.1}
			Let $\varphi_1$ be some fixed solution of the linear spectral problem \eqref{lp} with $\lambda=\lambda_1$, the DT for the gKP equation	\eqref{n+1kp} can be given as
			\begin{subequations} \label{DT}
				\begin{align}
					&\varphi[1]=\varphi_{x_1}+A\varphi,\label{DT1}\\
					&u[1]=u-2A_{x_1},\label{DT2}
				\end{align}
			\end{subequations}
			in which $A=-\left(\ln\varphi_1\right)_{x_1}$. That is, $(u[1],\varphi[1])$ given by the transformation \eqref{DT} satisfies the same form of the  linear spectral problem \eqref{lp}, i.e.
			\begin{subequations} \label{lp11}
				\begin{align}
					&\varphi[1]_{x_1x_1}+\delta\varphi[1]_{x_2}+u[1]\varphi[1]=\lambda\varphi[1],\\
					&\varphi[1]_t=\left(\frac\gamma\delta+\beta\delta\right)\varphi[1]_{x_1x_2}+\left(\frac\gamma\delta \partial^{-1}_{x_1}{u[1]_{x_2}}\right.\nonumber\\
					&\quad\quad\quad\left.+\beta u[1]_{x_1}\right)\varphi[1]-\left(2\beta u[1]+4\lambda\beta\right)\varphi[1]_{x_1}\nonumber\\
					&\quad\quad\quad -\sum_{i=2}^n\sigma_i\varphi[1]_{x_i}.
				\end{align}
			\end{subequations}
		\end{Theorem}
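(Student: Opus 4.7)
The plan is to verify the two equations of the transformed Lax pair \eqref{lp11} directly by substituting $\varphi[1]=\varphi_{x_1}+A\varphi$ and $u[1]=u-2A_{x_1}$ into them and reducing everything to the original Lax pair \eqref{lp}. First I would handle the spatial equation. Computing $\varphi[1]_{x_1x_1}+\delta\varphi[1]_{x_2}+u[1]\varphi[1]$ term by term, I expect the cubic-derivative block $\varphi_{x_1x_1x_1}+\delta\varphi_{x_1x_2}+u\varphi_{x_1}$ to be replaceable, via $\partial_{x_1}$ of the original spectral equation \eqref{lp1}, by $\lambda\varphi_{x_1}-u_{x_1}\varphi$; similarly $A(\varphi_{x_1x_1}+\delta\varphi_{x_2}+u\varphi)=\lambda A\varphi$. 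What remains should assemble into
\begin{equation*}
L[1]\varphi[1]-\lambda\varphi[1]=\bigl(A_{x_1x_1}+\delta A_{x_2}-2AA_{x_1}-u_{x_1}\bigr)\varphi,
\end{equation*}
so the spatial verification is reduced to a single scalar identity in $A$ and $u$.

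Next I would establish this identity from the definition $A=-\varphi_{1,x_1}/\varphi_1$ together with the fact that $\varphi_1$ solves \eqref{lp1} at $\lambda=\lambda_1$. Introducing the auxiliary function $B:=\varphi_{1,x_2}/\varphi_1$, a one-line logarithmic derivative computation gives $A^2-A_{x_1}=-\varphi_{1,x_1x_1}/\varphi_1$, and substituting \eqref{lp1} for $\varphi_1$ yields $A_{x_1}=A^2+u-\lambda_1+\delta B$. Differentiating this relation once in $x_1$ produces $A_{x_1x_1}-2AA_{x_1}-u_{x_1}=\delta B_{x_1}$. Finally, a direct calculation from the definitions of $A$ and $B$ shows the mixed-derivative identity $A_{x_2}+B_{x_1}=0$, and combining these two facts closes the spatial step.

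The temporal equation will follow the same strategy but is markedly more computational. I would again substitute $\varphi[1]$ and $u[1]$ into the stated form of $\varphi[1]_t$, expand every term, and whenever a $\varphi_{x_1x_1}$ or $\varphi_{x_1x_1x_1}$ appears, trade it using \eqref{lp1} or its $x_1$-derivative; similarly every $\varphi_{1,x_1x_1}$ or $\varphi_{1,t}$ is to be traded using \eqref{lp} at $\lambda=\lambda_1$. After grouping by powers of $\partial_{x_1}^k\varphi$ for $k=0,1$, I expect the coefficient of $\varphi_{x_1}$ to vanish identically thanks to the spatial identity proven above, while the coefficient of $\varphi$ should reduce to an expression in $A,B,u$ and their derivatives that is annihilated precisely by the zero-curvature condition $L_t-T_{x_1x_1}-\delta T_{x_2}+[L,T]=0$ for \eqref{lp}, which is equivalent to the gKP equation \eqref{n+1kp} itself.

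The main obstacle I anticipate is bookkeeping in the temporal step: the operator $T$ carries a mixed $\partial_{x_1x_2}$ piece together with the full sum $\sum_i\sigma_i\partial_{x_i}$, so a proliferation of cross derivatives of $A$, and of the auxiliary $B$, will appear. Organising them by repeatedly using the spatial identity, its $x_2$-derivative, and the compatibility $A_{x_2}+B_{x_1}=0$, should turn an otherwise unwieldy calculation into a tractable one; the delicate point will be to apply the gKP equation \eqref{n+1kp} exactly once and in the right place, so that the remaining terms cancel without double-invoking the seed compatibility already used.
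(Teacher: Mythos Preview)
Your proposal is correct and matches the paper's approach exactly: the paper states only that ``this theorem can be proved by direct computation, and for simplicity the procedure is omitted here,'' and what you outline is precisely that direct verification of the transformed spatial and temporal equations. Your organisation via the auxiliary $B=\varphi_{1,x_2}/\varphi_1$ and the identity $A_{x_2}+B_{x_1}=0$ is a clean way to manage the mixed derivatives that the paper leaves implicit.
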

		This theorem can be proved by direct computation, and for simplicity the procedure is omitted here.
		
		Theorem \ref{Th 2.1} demonstrates that a novel solution $u[1]$ for the gKP equation \eqref{n+1kp} can be constructed form the seed solution $u$ by the DT \eqref{DT}, and this procedure can be repeated in sequence, and additionally, the $N$-th DT for gKP equation \eqref{n+1kp} can be provided through determinant representation.
		
		Let $\varphi_1,\varphi_2,\ldots,\varphi_N$ be solutions for the linear spectral problem \eqref{lp}, $u[l],\varphi[l],\varphi_j[l]$ denote functions of $l$-times DT acting on the initial functions $u,\varphi,\varphi_j$, then
		\begin{subequations} \label{lpi}
			\begin{align}
				&\varphi_j[l]_{x_1x_1}+\delta\varphi_j[l]_{x_2}+u[l]\varphi_j[l]=\lambda\varphi_j[l],\\
				&\varphi_j[l]_t=\left(\frac\gamma\delta+\beta\delta\right)\varphi_j[l]_{x_1x_2}+\left(\frac\gamma\delta \partial^{-1}_{x_1}{u[l]_{x_2}}\right.\nonumber\\
				&~\quad\quad\quad\left.+\beta u[l]_{x_1}\right)\varphi_j[l]-\left(2\beta u[l]+4\lambda\beta\right)\varphi_j[l]_{x_1}\nonumber\\
				&~\quad\quad\quad-\sum_{i=2}^n\sigma_i\varphi_j[l]_{x_i}.
			\end{align}
		\end{subequations}
		It can be proven that for arbitrary integrals $k,l (1\leq k\leq l-1$, $1\leq l\leq N-1)$,
		\begin{subequations} \label{wr}
			\begin{align}
				&\mathrm{Wr}[\varphi_{l+1}[l],\ldots,\varphi_{l+k}[l]]\nonumber\\
				&=\frac{\mathrm{Wr}[\varphi_l[l-1],\varphi_{l+1}[l-1],\ldots,\varphi_{l+k}[l-1]]}{\varphi_l[l-1]}, \\
				&\mathrm{Wr}[\varphi_{l+1}[l],\ldots,\varphi_{l+k}[l],\varphi[l]]\nonumber\\
				&=\frac{\mathrm{Wr}[\varphi_l[l-1],\varphi_{l+1}[l-1],\ldots,\varphi_{l+k}[l-1],\varphi[l-1]]}{\varphi_l[l-1]},
			\end{align}
		\end{subequations}
		where $\mathrm{Wr}$ is the Wronskian determinant. Then, the $N$-th DT for the gKP equation \eqref{n+1kp} can be given by the following theorem.
		
		\begin{Theorem} \label{Thdt}
			Assume that $u$ is the solution of $(n+1)$-dimensional gKP equation \eqref{n+1kp}, $\varphi_1,\varphi_2,\ldots,\varphi_N$ are solutions of the corresponding linear spectral problem \eqref{lp} with potential $u$, then the N-th DT for \eqref{n+1kp} can be given as
			\begin{subequations} \label{DTn}
				\begin{align}
					&\varphi[N]=\frac{\mathrm{Wr}[\varphi_1,\varphi_2,\ldots,\varphi_N,\varphi]}{\mathrm{Wr}[\varphi_1,\varphi_2,\ldots,\varphi_N]},\label{DTn1}\\
					&u[N]=u+2\left(\ln \mathrm{Wr}[\varphi_{1},\varphi_{2},\ldots,\varphi_{N}]\right)_{x_1x_1}.\label{DTn2}
				\end{align}
			\end{subequations}
			In other words, $u[N], \varphi[N]$ satisfy
			\begin{subequations} \label{lpn}
				\begin{align}
					&\varphi[N]_{x_1x_1}+\delta\varphi[N]_{x_2}+u[N]\varphi[N]=\lambda\varphi[N],\\
					&\varphi[N]_t=\left(\frac\gamma\delta+\beta\delta\right)\varphi[N]_{x_1x_2}+\left(\frac\gamma\delta \partial^{-1}_{x_1}{u[N]_{x_2}}\right.\nonumber\\
					&\quad\quad\quad\left.+\beta u[N]_{x_1}\right)\varphi[N]-\left(2\beta u[N]+4\lambda\beta\right)\varphi[N]_{x_1}\nonumber\\
					&\quad\quad\quad~ -\sum_{i=2}^n\sigma_i\varphi[N]_{x_i}.
				\end{align}
			\end{subequations}
			and
			\begin{equation}\label{n+1kpn}
				\begin{aligned}
					&\left(u[N]_t+6\beta uu[N]_{x_1}+\beta u[N]_{x_1x_1x_1}\right)_{x_1}\\
					&\quad+\gamma u[N]_{x_2x_2}+\sum_{i=2}^n\sigma_iu[N]_{x_1x_i}=0,
				\end{aligned}
			\end{equation}
		\end{Theorem}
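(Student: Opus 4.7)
The plan is to proceed by induction on $N$, using the single-step DT of Theorem~\ref{Th 2.1} as the basic engine and the Wronskian identities \eqref{wr} as the bookkeeping device that converts the iterated transformation into determinantal form.

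For the base case $N=1$, formulas \eqref{DTn1}--\eqref{DTn2} reduce to $\varphi[1]=\mathrm{Wr}[\varphi_1,\varphi]/\varphi_1$ and $u[1]=u+2(\ln\varphi_1)_{x_1x_1}$; these agree with $\varphi[1]=\varphi_{x_1}+A\varphi$ and $u[1]=u-2A_{x_1}$ from Theorem~\ref{Th 2.1} after substituting $A=-(\ln\varphi_1)_{x_1}$, so \eqref{lpn} holds at $N=1$.

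For the inductive step, I would assume the result at level $N-1$, so that $(u[N-1],\varphi[N-1])$ and each $\varphi_j[N-1]$ with $j\ge N$ satisfy the spectral system \eqref{lpi}. Then I would apply Theorem~\ref{Th 2.1} once more, taking $\varphi_N[N-1]$ as the seed eigenfunction. Setting $\tilde A=-(\ln\varphi_N[N-1])_{x_1}$ gives
\begin{align*}
\varphi[N] &= \varphi[N-1]_{x_1}+\tilde A\,\varphi[N-1]=\frac{\mathrm{Wr}[\varphi_N[N-1],\varphi[N-1]]}{\varphi_N[N-1]}, \\
u[N] &= u[N-1]-2\tilde A_{x_1}=u[N-1]+2(\ln\varphi_N[N-1])_{x_1x_1}.
\end{align*}
Applying \eqref{wr} with $k=1,\ l=N-1$ and iterating the resulting telescope down to level $0$ collapses $\varphi[N]$ into $\mathrm{Wr}[\varphi_1,\ldots,\varphi_N,\varphi]/\mathrm{Wr}[\varphi_1,\ldots,\varphi_N]$, which is exactly \eqref{DTn1}. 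For the potential, I would sum $u[N]=u+\sum_{l=1}^{N}2(\ln\varphi_l[l-1])_{x_1x_1}$ and use the telescoping consequence of \eqref{wr} to identify $\prod_{l=1}^N\varphi_l[l-1]=\mathrm{Wr}[\varphi_1,\ldots,\varphi_N]$, so that the sum of logarithms combines into $2(\ln\mathrm{Wr}[\varphi_1,\ldots,\varphi_N])_{x_1x_1}$, giving \eqref{DTn2}.

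Because Theorem~\ref{Th 2.1} preserves the form of the linear spectral problem at every step, the final pair $(u[N],\varphi[N])$ automatically satisfies \eqref{lpn}; the evolution equation \eqref{n+1kpn} for $u[N]$ is then the compatibility condition of the Lax system \eqref{lpn}, inherited from the compatibility between \eqref{lp1} and \eqref{lp2} that encodes \eqref{n+1kp}. The hard part will be establishing the Wronskian identities \eqref{wr} cleanly: they are Jacobi/Pl\"ucker-type determinantal identities that themselves require an auxiliary induction on $k$ for fixed $l$, but once they are in hand the rest of the argument reduces to algebraic bookkeeping on the telescoping products and logarithmic derivatives above.
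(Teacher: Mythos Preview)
Your proposal is correct and follows essentially the same approach as the paper: iterate the single-step DT of Theorem~\ref{Th 2.1}, then use the Wronskian identities \eqref{wr} to telescope the resulting ratios and logarithmic sums into the closed determinantal formulas \eqref{DTn1}--\eqref{DTn2}. The paper presents this as a chain of equalities rather than a formal induction, and (like you) treats \eqref{wr} as the key input while leaving its Jacobi-type proof aside.
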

		\begin{proof}
			Using \eqref{DT} and \eqref{wr}
			\begin{subequations}
				\begin{align}
					\varphi[N] &=\varphi[N-1]-\frac{\varphi_{N}[N-1]_{x_1}}{\varphi_N[N-1]}\varphi[N-1]\nonumber\\
					&=\frac{\mathrm{Wr}[\varphi_N[N-1],\varphi[N-1]]}{\mathrm{Wr}[\varphi_N[N-1]]} \nonumber \\
					&=\frac{\mathrm{Wr}[\varphi_{N-1}[N-2],\varphi_N[N-2],\varphi[N-2]]}{\mathrm{Wr}[\varphi_{N-1}[N-2],\varphi_N[N-2]]}\nonumber\\
					&=\cdots=\frac{\mathrm{Wr}[\varphi_1,\varphi_2,\ldots,\varphi_N,\varphi]}{\mathrm{Wr}[\varphi_1,\varphi_2,\ldots,\varphi_N]}, \\\nonumber\\
					u[N] &=u[N-1]+2(\ln\varphi_N[N-1]) _{x_1x_1}\nonumber \\
					&=u[N-2]+2(\ln\varphi_{N-1}[N-2])_{x_1x_1}\nonumber\\
					&\quad+2\left(\ln\frac{ \mathrm{Wr}\left[\varphi_{N-1}[N-2],\varphi_N[N-2]\right] }{\varphi_{N-1}[N-2]}\right) _{x_1x_1}\nonumber\\
					&=u[N-2]+2\left(\ln \mathrm{Wr}\left[\varphi_{N-1}[N-2],\right.\right.\nonumber\\
					&\quad \left.\left.\varphi_N[N-2]\right]\right)_{x_1x_1}\nonumber\\
					&=\cdots=u+2\left(\ln \mathrm{Wr}[\varphi_{1},\varphi_{2},\ldots,\varphi_{N}]\right)_{x_1x_1}.
				\end{align}
			\end{subequations}
		\end{proof}

		\section {Traveling Jacobi elliptic function seed solution for the gKP equation}
		\label{sec3}
		
		To construct nonlinear waves on the Jacobi elliptic function periodic wave background, the periodic solution for the gKP equation \eqref{n+1kp} should first be obtained as the seed solution of DT \eqref{DT}. To this end, we consider the following form of traveling wave solution
		\begin{equation}\label{x+ct}
			u=U(\eta)=U(x_1+ct),
		\end{equation}
		where $c$ is the traveling wave velocity. Substituting \eqref{x+ct} into the gKP equation \eqref{n+1kp} yields
		\begin{equation}\label{eqU}
			cU'+6\beta UU'+\beta U'''=0
		\end{equation}
		By applying the Jacobi elliptic function expansion method \cite{EGF-BYH-2002, ZYY-2003}, $U(\eta)$ can be expressed as a finite series of Jacobi elliptic functions of the form:
		\begin{equation}\label{snseries}
			U(\eta)=\sum_{j=0}^na_j\mathrm{sn}^j(\eta,k),
		\end{equation}
		Then, balancing the highest order derivative term and the nonlinear term in \eqref{eqU}, one can determine $n = 2$. Therefore, the Jacobi elliptic function periodic solution for the gKP equation \eqref{n+1kp} can bu given by direct computation as
		\begin{equation}
			U(\eta)=c-4\beta\left(1+k^2\right)-2k^2\mathrm{sn}^2(\eta,k).
		\end{equation}
		Without loss of generality, we consider the simplest case as the seed solution, that is
		\begin{equation}\label{seedsn}
			U(\eta)=-2k^2\mathrm{sn}^2(\eta,k),
		\end{equation}
		in which $c=4\beta (k^2+1)$, and $k\in(0,1)$ is an arbitrary parameter.
		
		\section {Solutions of the spectral problems via Lam\'{e} equation}	
		\label{sec4}
		
		The key step in constructing the nonlinear wave solution by the DT is to solve the linear spectral problem \eqref{lp} with the seed solution, which is handled by using the Lam\'{e} equation in Jacobi form in this section. Substituting the normalized traveling Jacobi elliptic wave \eqref{seedsn} obtained in Section \ref{sec3} into the linear spectral problem \eqref{lp}, we have
		\begin{subequations}\label{lp1sn}
			\begin{align}
				&\varphi_{x_1x_1}+\delta\varphi_{x_2}-2k^2\mathrm{sn}^2(\eta,k)\varphi+\nu\varphi=0,\label{lp1sn1}\\
				&\varphi[1]_t-4\beta\delta\varphi_{x_1x_2}+4\beta k^2\mathrm{sn}(\eta,k)\mathrm{cn}(\eta,k)\mathrm{dn}(\eta,k)\varphi\nonumber\\
				&+4\lambda \left(\beta-k^2\mathrm{sn}^2(\eta,k)\right)\varphi_{x_1}+\sum_{i=2}^n\sigma_i\varphi_{x_i}=0.\label{lp1sn2}
			\end{align}
		\end{subequations}
		where $\nu=-\lambda$. Obviously, equation \eqref{lp1sn1} can be associated with the Lam\'{e} equation \eqref{lameeq}. According to the general solution of the Lam\'{e} equation \cite{ELI-1956}, two linearly independent solutions of system \eqref{lp1sn} can be assumed as
		\begin{equation}\label{geneform}
			\varphi_\pm(\eta)=\frac{H(\eta\pm\alpha)}{\Theta(\eta)}\mathrm{exp}\left(\mp Z(\alpha)\eta\mp \omega_1t\mp \sum_{i=2}^{n}\omega_{i}x_{i}\right),
		\end{equation}
		where $\alpha, \omega_{i}(i=1,2,\cdots, n)$ are arbitrary constants, $Z(\alpha)=\Theta^{\prime}(\alpha)/\Theta(\alpha)$ is the Jacobi zeta function, and $H,\Theta$ fundamentally represent Jacobi theta functions, detailed further in the appendix \eqref{appHtheta}. Substituting the general solutions \eqref{geneform} into \eqref{lp1sn1} yields
		\begin{equation} \label{eta pm}
			\nu_{\pm}=k^2+\mathrm{dn}^2(\alpha,k)\pm \delta \omega_2,
		\end{equation}
		Then, the characteristic equations can be obtained as
		\begin{subequations} \label{lambda pm}
			\begin{align}
				&\lambda_{+}=-1-k^2\mathrm{cn}^2(\alpha,k)- \delta \omega_2,\label{lambda+}\\
				&\lambda_{-}=-1-k^2\mathrm{cn}^2(\alpha,k)+ \delta \omega_2.\label{lambda-}
			\end{align}
		\end{subequations}
		At the same time, the special solutions $\varphi_{1\pm}(\eta)$, $\varphi_{2\pm}(\eta)$, $\varphi_{3\pm}(\eta)$ of \eqref{lp1sn} corresponding $\lambda_{1\pm},\lambda_{2\pm},\lambda_{3\pm}$ can be derived as
		\begin{equation}\label{spesol}
			\begin{aligned}
				&\varphi_{1\pm}(\eta):=\mathrm{dn}(\eta,k)\mp \omega_2y,\quad  \lambda_{1\pm}=-k^2\mp \delta \omega_2,\\
				&\varphi_{2\pm}(\eta):=\mathrm{cn}(\eta,k)\mp \omega_2y,\quad  \lambda_{2\pm}=-1\mp \delta \omega_2,\\
				&\varphi_{3\pm}(\eta):=\mathrm{sn}(\eta,k)\mp \omega_2y,\quad  \lambda_{3\pm}=-1-k^2\mp \delta \omega_2,
			\end{aligned}
		\end{equation}
		for $k\in (0,1),$ in which
		\begin{equation}\label{lambda123}
			\lambda_{3+}<\lambda_{2+}<\lambda_{1+},\quad \lambda_{3-}<\lambda_{2-}<\lambda_{1-}.   		
		\end{equation}
		Therefore, $\lambda_+$ and $\lambda_-$ can be divided into four intervals according to the relations \eqref{lambda123}. Based on the characteristic equation \eqref{lambda pm}, the definition of $\alpha$ can be given in each interval when $\lambda_{+}$ is increased from $\lambda_{3+}$ to $+\infty$.
		
		\begin{Proposition}\label{Propsin}
			For $k\in(0,1)$, three cases of $\lambda$ at different intervals are considered:
			\begin{itemize}
				\item[\textbf{(i)}]If $\lambda_+\in[\lambda_{3+},\lambda_{2+}]$, $\alpha=F(\phi_\alpha,k)\in[0,K(k)]$, where $\phi_\alpha\in[0,\frac\pi2]$ and
				\begin{equation}\label{sinalpha}
					\sin\phi_{\alpha}=\frac{\sqrt{\lambda_{+}+1+\delta \omega_{2}+k^{2}}}{k},
				\end{equation}
				\item[\textbf{(ii)}]If $\lambda_+\in[\lambda_{2+},\lambda_{1+}]$, $\alpha=K(k)+i\beta$ with $\beta=F(\phi_\beta,k^{\prime})\in[0,K^{\prime}(k)]$, where $\phi_\beta\in[0,\frac\pi2]$ and
				\begin{equation}\label{sinbeta}
					\sin\phi_{\beta}=\frac{\sqrt{\lambda_{+}+1+\delta \omega_{2}}}{k \sqrt{\lambda_{+}+2-k^{2}+\delta \omega_{2}}},
				\end{equation}
				\item[\textbf{(iii)}]If $\lambda_+\in[\lambda_{1+},+\infty)$, $\alpha=K(k)+iK^{\prime}(k)+\gamma$ with $\gamma=F(\phi_\gamma,k)\in[0,K(k))$, where $\phi_\gamma\in[0,\frac\pi2)$ and
				\begin{equation}\label{singamma}
					\sin\phi_{\gamma}=\frac{\sqrt{\lambda_{+}+k^{2}+\delta \omega_{2}}}{\sqrt{\lambda_{+}+1+\delta \omega_{2}}},
				\end{equation}
			\end{itemize}
			in which the definition of functions $F(\phi,k)$ and $K(k)$	is given in the appendix.		
		\end{Proposition}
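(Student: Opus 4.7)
The plan is to invert the characteristic equation \eqref{lambda+}, namely $\lambda_+ + 1 + \delta\omega_2 = -k^2\mathrm{cn}^2(\alpha,k)$, for $\alpha$ as a function of $\lambda_+$. Since $\mathrm{cn}^2(\alpha,k)$ takes values in $[0,1]$ for real $\alpha \in [0, K(k)]$, real $\alpha$ can only produce $\lambda_+ \in [\lambda_{3+}, \lambda_{2+}]$; to access the other spectral intervals I would slide $\alpha$ along the remaining edges of its quasi-period rectangle, exploiting the quarter-periods $K(k)$ and $iK'(k)$.

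Case (i) is the most direct. From $\mathrm{cn}^2(\alpha, k) = -(\lambda_+ + 1 + \delta\omega_2)/k^2$ and the Pythagorean identity $\mathrm{sn}^2+\mathrm{cn}^2=1$, one gets $\mathrm{sn}^2(\alpha,k) = (\lambda_+ + 1 + k^2 + \delta\omega_2)/k^2$. Setting $\alpha = F(\phi_\alpha, k)$, so that $\mathrm{sn}(\alpha,k) = \sin\phi_\alpha$, delivers \eqref{sinalpha} immediately; the endpoints $\phi_\alpha = 0, \pi/2$ correspond to $\lambda_+ = \lambda_{3+}, \lambda_{2+}$, which confirms the interval.

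For case (ii) I would put $\alpha = K(k) + i\beta$ and chain together the half-period shift $\mathrm{cn}(K+u,k) = -k'\mathrm{sn}(u,k)/\mathrm{dn}(u,k)$ with the imaginary-argument identities $\mathrm{sn}(iv,k) = i\mathrm{sn}(v,k')/\mathrm{cn}(v,k')$ and $\mathrm{dn}(iv,k) = \mathrm{dn}(v,k')/\mathrm{cn}(v,k')$. Composing these expresses $\mathrm{cn}^2(K + i\beta, k)$ as a rational function of $\mathrm{sn}(\beta, k')$; substituting into the characteristic equation and solving the resulting algebraic relation for $\mathrm{sn}(\beta, k') = \sin\phi_\beta$ yields \eqref{sinbeta}. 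Monotonicity in $\beta \in [0, K'(k)]$ then matches $\phi_\beta \in [0,\pi/2]$ with $\lambda_+ \in [\lambda_{2+}, \lambda_{1+}]$. Case (iii) follows the same idea with $\alpha = K(k) + iK'(k) + \gamma$; the combined shift $\mathrm{cn}(u+K+iK', k) = -ik'/(k\,\mathrm{cn}(u,k))$ reduces the characteristic equation to $\mathrm{cn}^2(\gamma,k) = k'^2/(\lambda_+ + 1 + \delta\omega_2)$, and hence $\sin^2\phi_\gamma = (\lambda_+ + k^2 + \delta\omega_2)/(\lambda_+ + 1 + \delta\omega_2)$, which is \eqref{singamma}. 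The limits $\lambda_+ \to \lambda_{1+}$ and $\lambda_+ \to +\infty$ drive $\phi_\gamma \to 0$ and $\phi_\gamma \to \pi/2^-$, consistent with the half-open interval $[0, K(k))$.

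The hard part will be bookkeeping rather than any conceptual difficulty: I have to keep careful track of signs, branch choices, and quasi-period shifts for the Jacobi functions along both the real and imaginary directions, so that each $\mathrm{cn}^2$ expression ends up real with the correct sign on its prescribed subinterval. Once the shift identities are applied cleanly and the Pythagorean identity is used to convert $\mathrm{cn}^2$ into $\mathrm{sn}^2$, each case reduces to an elementary algebraic inversion delivering the stated formula.
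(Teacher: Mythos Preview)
Your proposal is correct and follows essentially the same route as the paper: in each interval you invert the characteristic equation $\lambda_+ + 1 + \delta\omega_2 = -k^2\mathrm{cn}^2(\alpha,k)$ by placing $\alpha$ on the appropriate edge of the period rectangle ($\alpha$ real, $\alpha=K+i\beta$, $\alpha=K+iK'+\gamma$), apply the relevant half/quarter-period and imaginary-argument identities for $\mathrm{cn}$, and then read off $\sin\phi$ via $\mathrm{sn}=\sin\mathrm{am}$. The only cosmetic difference is that in case~(ii) you derive the formula $\mathrm{cn}(K+i\beta,k)=-ik'\,\mathrm{sn}(\beta,k')/\mathrm{dn}(\beta,k')$ by composing a real half-period shift with the imaginary-argument identities, whereas the paper simply quotes it directly; the resulting computations and endpoint/monotonicity checks are otherwise identical.
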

		\begin{proof}\textbf{(i)} If $\lambda_+\in[\lambda_{3+},\lambda_{2+}]$, according to the characteristic equation \eqref{lambda+}, $\mathrm{cn}^2(\alpha,k)\in[0,1]$
			can be obtained. As we known that one of the periods of the Jacobi elliptic function $\mathrm{cn}$ is $4K(k)$, so $\alpha\in[0, K(k)]$ mod $K(k)$. Solving the characteristic equation \eqref{lambda+} with $\sin\phi_\alpha=\operatorname{sn}(\alpha,k)$, the equation \eqref{sinalpha} can be verified. At the same time, the function $\phi_\alpha$ exhibits monotonicity when $\lambda_+\in[\lambda_{3+},\lambda_{2+}]$. Therefore, $\alpha=F(\phi_\alpha,k)$ and $\phi_\alpha\in[0,\frac\pi2]$.
			
			\textbf{(ii)} If $\lambda_+\in[\lambda_{2+},\lambda_{1+}]$, we have
			$$\mathrm{cn}^2(\alpha,k)\in[1-1/k^2,0],$$
			and
			$$\alpha=K(k)+i\beta.$$
			Employing the properties of Jacobi elliptic functions \cite{DFL-2013} yields
			\begin{equation}\label{propertycn}
				\mathrm{cn}(K(k)+i\beta,k)=-ik^\prime\frac{\mathrm{sn}(\beta,k^\prime)}{\mathrm{dn}(\beta,k^\prime)},
			\end{equation}
			in which $k^\prime=\sqrt{1-k^2}$.  substituting \eqref{propertycn} into the characteristic equation \eqref{lambda+}, we have
			\begin{equation}
				\mathrm{sn}^{2}(\beta,k^{\prime})=\frac{\lambda_{+}+1+\delta \omega_{2}}{k^{2}(\lambda_{+}+2-k^{2}+\delta \omega_{2})},
			\end{equation}
			and the conclusions that $\mathrm{sn}^{2}(\beta,k^{\prime})\in[0,1]$ and $\beta\in[0,K^{\prime}(k)]$ mod $K^{\prime}(k)$. The equation \eqref{sinbeta} can be verified by solving the characteristic equation \eqref{lambda+} with $\sin\phi_\beta=\mathrm{sn} (\beta,k^{\prime})$. On the other hand, the function $\phi_\beta$ is monotonically increasing when $\lambda_+\in[\lambda_{2+},\lambda_{1+}]$. Consequently, $\beta=F(\phi_\beta,k^{\prime})$ and $\phi_\beta\in[0,\frac\pi2]$.
			
			\textbf{(iii)} If $\lambda_+\in[\lambda_{1+},+\infty)$, according the characteristic equation \eqref{lambda+}, we have
			$$\mathrm{cn}^2(\alpha,k)\in (-\infty,1-1/k^2],$$
			and
			$$\alpha=K (k)+iK^{\prime}(k)+\gamma.$$
			Using the special relations of Jacobi elliptic functions \cite{DFL-2013}
			\begin{equation}\label{propertycn1}
				\mathrm{cn}(K(k)+iK'(k)+\gamma,k)=-\frac{ik^{\prime}}{k\mathrm{cn}(\gamma,k)},
			\end{equation}
			and substituting it into the characteristic equation \eqref{lambda+} yields
			\begin{equation}
				\mathrm{cn}^{2}(\gamma,k)=\frac{1-k^{2}}{\lambda_{+}+1+\delta \omega_{2}},
			\end{equation}
			which indicate that $\mathrm{sn}^{2}(\gamma,k)\in [0,1]$ and $\gamma\in [0,K(k))$ mod $K(k)$. Solving the characteristic equation \eqref{lambda+} with $\sin\phi_\gamma=\mathrm{sn}(\gamma,k)$ leads to the equation \eqref{singamma}. The fact that the function $\phi_\gamma$ is monotonically increasing when $\lambda_+\in[\lambda_{1+},+\infty)$ evidence that $\gamma=F(\phi_\gamma,k)$ and $\phi_\gamma\in[0,\frac\pi2)$.
		\end{proof}
		
		Since $\lambda_{+}=\lambda_{-}-2\delta\omega_{2}$, the investigation of $\lambda_{-}$ on the interval divided by $\lambda_{1-}$, $\lambda_{2-}$ and $\lambda_{3-}$ can be done using the same technique as for analysing $\lambda_{+}$ above. For the sake of simplicity and without causing confusion, $\lambda$ is used to stand for $\lambda_{+}$ in the following text.
		
		Substituting two linearly independent solutions \eqref{geneform} into \eqref{lp1sn2}, a constraint equation associated with $\omega_1$ can be reduced as
		\begin{equation}\label{omega1x}
			\begin{aligned}
				\omega_1&=\left(c+4\beta\lambda_{\pm}-4\beta k^2\mathrm{sn}(\eta,k))\pm 4\beta\delta\omega_{2}\right)\\
				&\quad \times\left(\pm\frac{H'(\eta\pm\alpha)}{H(\eta\pm\alpha)}\mp Z(\eta)-Z(\alpha) \right)-\sum_{i=2}^n\sigma_i\omega_{i}\\
				&\quad \pm 4k^2\mathrm{sn}(\eta,k)\mathrm{cn}(\eta,k)\mathrm{dn}(\eta,k),
			\end{aligned}
		\end{equation}
		which,  due to compatibility of system \eqref{lp1sn}, is valid for arbitrary $\eta\in \mathbb{R}$. Furthermore, taking $\eta=0$ and substituting $\lambda_{+}=\lambda_{-}-2\delta\omega_{2}$ and $c=4\beta(k^2+1)$, $\omega_1$ can be derived as
		\begin{equation}\label{omega1}
			\omega_1=4\beta(\lambda+1+\delta \omega_{2}+k^{2})\left(\frac{H^{\prime}\left(\alpha\right)}{H\left(\alpha\right)}-\frac{\Theta^{\prime}\left(\alpha\right)}{\Theta\left(\alpha\right)}\right)-\sum_{i=2}^n\sigma_i\omega_{i},
		\end{equation}
		in which, the parity of $H$ and $\Theta$  is considered, that is
		\begin{equation}\label{parity}
			H(-x)=-H(x),\quad\Theta(-x)=\Theta(x).
		\end{equation}
		
		Using properties of the Jacobi theta function, the case of $\omega_1$ in the three different cases \textbf{(i)}, \textbf{(ii)}, \textbf{(iii)} can be given in the following proposition.
		\begin{Proposition}\label{propomega1}
			For $k\in(0,1)$,
			\begin{itemize}
				\item[$\bullet$]If $\lambda\in(\lambda_{3+},\lambda_{2+})\cup(\lambda_{1+},+\infty)$, $\omega_1\in\mathbb{R}$,
				\item[$\bullet$]If $\lambda\in(\lambda_{2+},\lambda_{1+})$, $\omega_1\in i\mathbb{R}$.
			\end{itemize}			
		\end{Proposition}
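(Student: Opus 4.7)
The plan is to first collapse the expression \eqref{omega1} into a single product of Jacobi elliptic functions, and then read off its reality type in each of the three intervals from Proposition \ref{Propsin}. First, I would use the characteristic equation \eqref{lambda+}, which rearranges to $\lambda+1+\delta\omega_{2}+k^{2}=k^{2}\mathrm{sn}^{2}(\alpha,k)$, to replace the scalar prefactor of \eqref{omega1} by $4\beta k^{2}\mathrm{sn}^{2}(\alpha,k)$. Second, since $\mathrm{sn}(\alpha,k)$ is proportional to $H(\alpha)/\Theta(\alpha)$, taking a logarithmic derivative gives
\begin{equation*}
\frac{H'(\alpha)}{H(\alpha)}-\frac{\Theta'(\alpha)}{\Theta(\alpha)}=\frac{\mathrm{cn}(\alpha,k)\mathrm{dn}(\alpha,k)}{\mathrm{sn}(\alpha,k)}.
\end{equation*}
Combining the two reductions turns \eqref{omega1} into the compact form
\begin{equation*}
\omega_{1}=4\beta k^{2}\,\mathrm{sn}(\alpha,k)\mathrm{cn}(\alpha,k)\mathrm{dn}(\alpha,k)-\sum_{i=2}^{n}\sigma_{i}\omega_{i},
\end{equation*}
and since the last sum is real, the question reduces to determining the type (real or purely imaginary) of the triple product $\mathrm{sn}\,\mathrm{cn}\,\mathrm{dn}$ at the complex argument $\alpha$.

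With this form in hand, each case of Proposition \ref{Propsin} is handled by the standard translation identities for Jacobi elliptic functions. In case (i), $\alpha\in[0,K(k)]$ is real, so all three factors are real and $\omega_{1}\in\mathbb{R}$. In case (ii), $\alpha=K(k)+i\beta$; combining the shift-by-$K(k)$ formulas with the pure-imaginary-argument formulas (an extension of \eqref{propertycn}) shows that $\mathrm{sn}(\alpha,k)$ and $\mathrm{dn}(\alpha,k)$ are real while $\mathrm{cn}(\alpha,k)$ is purely imaginary, making the triple product purely imaginary. In case (iii), $\alpha=K(k)+iK'(k)+\gamma$; using the shift-by-$K(k)+iK'(k)$ identities in the spirit of \eqref{propertycn1} gives $\mathrm{sn}(\alpha,k)$ real and both $\mathrm{cn}(\alpha,k)$ and $\mathrm{dn}(\alpha,k)$ purely imaginary, so their product is real once more.

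The main obstacle is not conceptual but bookkeeping: verifying the complex shift identities needed in cases (ii) and (iii), namely how each of $\mathrm{sn}$, $\mathrm{cn}$, $\mathrm{dn}$ transforms under shifts by $K(k)$ and by $iK'(k)$ via Jacobi's imaginary transformation. The shortcut that makes the argument clean is the simplification step: recognizing $H'/H-\Theta'/\Theta$ as the logarithmic derivative of $\mathrm{sn}(\alpha,k)$ collapses the theta-function combination to an elementary product, reducing a question about theta quotients to a parity check on three real-or-imaginary factors.
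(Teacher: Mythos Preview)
Your approach is correct and matches the method the paper itself sketches: the paper only states that ``the half-periodic transformation of the Jacobi theta function and the logarithmic derivative formula are mainly utilized'', and that is exactly what you do by collapsing $H'/H-\Theta'/\Theta$ to $\mathrm{cn}\,\mathrm{dn}/\mathrm{sn}$ via \eqref{HTheta1} and then applying the $K(k)$- and $iK'(k)$-shift identities in each interval of Proposition~\ref{Propsin}.

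One small logical slip worth flagging: you write that ``since the last sum is real, the question reduces to determining the type of the triple product''. That inference is fine for the conclusion $\omega_1\in\mathbb{R}$, but not for the conclusion $\omega_1\in i\mathbb{R}$: a nonzero real term $\sum_{i=2}^n\sigma_i\omega_i$ added to a purely imaginary number does not land in $i\mathbb{R}$. This is an imprecision already present in the paper's statement rather than a defect of your method; the proposition should be read as describing the reality type of the nontrivial $\alpha$-dependent part $4\beta k^{2}\mathrm{sn}(\alpha,k)\mathrm{cn}(\alpha,k)\mathrm{dn}(\alpha,k)$, which is what actually controls the qualitative behaviour of the eigenfunction.
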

		The half-periodic transformation of the Jacobi theta function and the logarithmic derivative formula are mainly utilized \cite{JB-CS-2019}, and the proof procedure is not repeated here. Up to this point, the general solution of system \eqref{lp1sn} is a linear combination of two linearly independent solutions \eqref{geneform}, we have the following theorem.
		\begin{Theorem}	\label{Th1}
			The general solution of the eigenfunction for the linear spectral problem \eqref{lp1sn} associated with the eigenvalue $\lambda$ can be given as
			\begin{equation}\label{generalform}
				\varphi=C_+\frac{H(\eta+\alpha)}{\Theta(\eta)}\mathrm{exp}(-\iota)+C_-\frac{H(\eta-\alpha)}{\Theta(\eta)}\mathrm{exp}(\iota),
			\end{equation}
			in which,
				\begin{align*}
					&\iota=Z(\alpha)\eta+\omega_1t+\sum_{i=2}^{n}\omega_{i}x_{i},\\
					&\eta=x_1+ct,
				\end{align*}
			$\beta, \delta, C_{\pm}, \omega_i(i=2,3,\cdots, n), \sigma_i(i=1,2,\cdots, n)$ are arbitrary constants, $\omega_1$ is given in \eqref{omega1}.
		\end{Theorem}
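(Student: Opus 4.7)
The plan is to exhibit the right-hand side of \eqref{generalform} as a linear combination of two already-constructed linearly independent solutions of the coupled system \eqref{lp1sn}, and then argue that no further solutions exist because the space of common eigenfunctions has dimension exactly two. Concretely, I would proceed in four steps.

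First, I would recall that the Lamé-type ansatz \eqref{geneform} was shown, via \eqref{eta pm} and the derivation culminating in \eqref{lambda pm}, to produce two one-parameter families $\varphi_{+}(\eta)$ and $\varphi_{-}(\eta)$ satisfying the spatial part \eqref{lp1sn1} for the fixed eigenvalue $\lambda=\lambda_{+}$. Because \eqref{lp1sn1}, read as an ODE in $\eta$ with the $x_{2}$-dependence absorbed into the exponential factor $\exp(\mp\sum_{i\ge 2}\omega_{i}x_{i})$, is a homogeneous second-order linear equation (the Jacobi form of Lamé), its solution space is exactly two-dimensional. Thus $\varphi_{+}$ and $\varphi_{-}$ span all solutions of \eqref{lp1sn1}, provided they are linearly independent.

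Second, I would verify linear independence. For generic $\alpha$ the translates $H(\eta+\alpha)$ and $H(\eta-\alpha)$ are not proportional (their zero sets differ), and the exponential factors $\exp(-Z(\alpha)\eta)$ and $\exp(+Z(\alpha)\eta)$ carry opposite signs, so a computation of the Wronskian at $\eta=0$ produces a nonzero value whenever $\alpha$ is not a lattice point of the theta function. Degenerate values of $\alpha$ correspond precisely to the edge eigenvalues $\lambda_{j\pm}$ listed in \eqref{spesol} and fall outside the open intervals considered in Proposition \ref{Propsin}, so on each of those intervals independence is automatic.

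Third, I would invoke compatibility with the temporal equation \eqref{lp1sn2}. This is exactly what was accomplished in the derivation leading to \eqref{omega1x}--\eqref{omega1}: the same ansatz \eqref{geneform} substituted into \eqref{lp1sn2} forces the relation \eqref{omega1} on $\omega_{1}$, which is independent of $\eta$ by the compatibility of \eqref{lp1sn}. Hence with $\omega_{1}$ chosen according to \eqref{omega1} and with the phase $\iota=Z(\alpha)\eta+\omega_{1}t+\sum_{i\ge2}\omega_{i}x_{i}$, both $\varphi_{+}$ and $\varphi_{-}$ simultaneously solve \eqref{lp1sn1} and \eqref{lp1sn2}. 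Linearity of \eqref{lp1sn} then gives that every combination $C_{+}\varphi_{+}+C_{-}\varphi_{-}$ of the form \eqref{generalform} is a common eigenfunction.

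Finally, to conclude that \eqref{generalform} is the \emph{general} solution, I would note that any common eigenfunction of \eqref{lp1sn} with spectral parameter $\lambda$ is in particular a solution of \eqref{lp1sn1}, hence lies in the two-dimensional span of $\varphi_{+}$ and $\varphi_{-}$. The main obstacle I anticipate is bookkeeping: making sure the expression \eqref{omega1} for $\omega_{1}$ that was obtained by setting $\eta=0$ in \eqref{omega1x} really is $\eta$-independent for all admissible $\eta$. This reduces to an identity among the Jacobi theta quotients $H'/H$, $\Theta'/\Theta$ and the zeta function $Z$, which follows from the logarithmic-derivative and half-period identities already cited after Proposition \ref{propomega1}; checking this identity carefully is the one nonroutine computation in the argument.
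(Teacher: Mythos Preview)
Your proposal is correct and follows the paper's own route: the paper does not give a separate proof of Theorem~\ref{Th1} but presents it as a summary of the preceding computations (the Lam\'e ansatz \eqref{geneform}, the characteristic relation \eqref{lambda pm}, and the determination of $\omega_1$ in \eqref{omega1x}--\eqref{omega1}), relying implicitly on the classical fact that the Jacobi-form Lam\'e equation has a two-dimensional solution space. Your write-up makes explicit the linear-independence and dimension-count steps that the paper leaves tacit, and your identification of the $\eta$-independence of \eqref{omega1x} as the one point needing care matches exactly the paper's appeal to ``compatibility of system \eqref{lp1sn}''.
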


		\section {Breather Waves on the Traveling Jacobi Elliptic Function Periodic Background with Dispersion Effects}	
		\label{sec5}
		
		In this section, the breather waves on the Jacobi elliptic function periodic wave background are constructed by substituting the general solution \eqref{generalform} into the DT \eqref{DTn}, and the corresponding evolution and nonlinear dynamics are also discussed.
		
		\subsection {First order breather waves on the traveling Jacobi elliptic function periodic wave background}
		
		For $N = 1$, the nonlinear wave solution on the traveling Jacobi elliptic function periodic wave background \eqref{seedsn} for the gKP equation \eqref{n+1kp}
		can be derived as
		\begin{equation}\label{snu1}
			u[1]=-2k^2\mathrm{sn}^2(\eta,k)+2\partial_{x_1}^2\ln\varphi,
		\end{equation}
		In order to analyze the dynamics of nonlinear wave solutions, it is essential to connect the Jacobi elliptic wave \eqref{seedsn} to the Jacobi theta function  $\Theta$, which can be established by the following proposition.
		\begin{Proposition}	\label{propjt}
			For each $k\in (0,1)$, it holds true that
			\begin{equation}\label{sn-theta}
				-k^2\mathrm{sn}^2(\eta,k)=\frac{E(k)}{K(k)}-1+\partial_{x_1}^2\ln\Theta(\eta).
			\end{equation}
			where the appendix provides a detailed explanation of functions $E(k)$ and $\Theta(\eta)$.
		\end{Proposition}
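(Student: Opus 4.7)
The plan is to reduce the identity to a well-known differential relation for the Jacobi zeta function $Z(\eta)=\Theta'(\eta)/\Theta(\eta)$ and then use the Pythagorean-type identity between $\mathrm{dn}$ and $\mathrm{sn}$. Since $\eta=x_1+ct$ with $c$ independent of $x_1$, the chain rule gives $\partial_{x_1}^2\ln\Theta(\eta)=\partial_\eta^2\ln\Theta(\eta)$, so it suffices to establish
\begin{equation*}
-k^2\mathrm{sn}^2(\eta,k)=\frac{E(k)}{K(k)}-1+\partial_\eta^2\ln\Theta(\eta).
\end{equation*}

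First I would rewrite $\partial_\eta^2\ln\Theta(\eta)=Z'(\eta)$, where $Z$ is the Jacobi zeta function defined in the appendix via $Z(\eta)=\Theta'(\eta)/\Theta(\eta)$. Next I would invoke the classical differentiation formula
\begin{equation*}
Z'(\eta)=\mathrm{dn}^2(\eta,k)-\frac{E(k)}{K(k)},
\end{equation*}
which is the standard derivative identity for the Jacobi zeta function (the constant $E(k)/K(k)$ is exactly the mean value of $\mathrm{dn}^2$ over a period, ensuring that $Z$ itself is quasi-periodic with zero mean, consistent with the quasi-periodicity of $\ln\Theta$). This identity is the main analytic input and can be quoted from any standard reference on elliptic functions.

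Then I would apply the elementary Jacobi identity $\mathrm{dn}^2(\eta,k)=1-k^2\mathrm{sn}^2(\eta,k)$ to obtain
\begin{equation*}
\partial_\eta^2\ln\Theta(\eta)=1-k^2\mathrm{sn}^2(\eta,k)-\frac{E(k)}{K(k)},
\end{equation*}
and rearranging yields exactly the claim. The whole argument is essentially a two-line manipulation once the derivative formula for $Z$ is in hand.

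The main (and really only) obstacle is ensuring that the normalization constant appearing on the right-hand side is indeed $E(k)/K(k)-1$ rather than some competing convention, since different sources place the mean value of $\mathrm{dn}^2$ either in $Z$ itself or as an additive constant elsewhere. I would therefore briefly verify the normalization by integrating $\mathrm{dn}^2$ over one real period $2K(k)$ using $\int_0^{2K(k)}\mathrm{dn}^2(\eta,k)\,\d\eta=2E(k)$, which forces the mean value of $Z'(\eta)$ to vanish and pins down the constant as $E(k)/K(k)$. With this verification, the identity in the proposition follows immediately.
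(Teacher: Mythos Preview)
Your argument is correct. The paper takes a slightly different route: it quotes the theta-function identity
\[
\partial_y^2\log\theta_4(y)=\frac{\theta_4''(0)}{\theta_4(0)}-\theta_2^4(0)\,\mathrm{sn}^2(\eta,k),
\qquad \eta=\theta_3^2(0)\,y,\quad k=\frac{\theta_2^2(0)}{\theta_3^2(0)},
\]
rescales via the chain rule to pass from $\theta_4(y)$ to $\Theta(\eta)$, and then identifies the constant separately through $\Theta''(0)/\Theta(0)=1-E(k)/K(k)$. Your approach instead packages both steps into the single classical formula $Z'(\eta)=\mathrm{dn}^2(\eta,k)-E(k)/K(k)$ and finishes with $\mathrm{dn}^2=1-k^2\mathrm{sn}^2$. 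The two are equivalent in substance; your version is a bit more streamlined and makes the normalization check (via the mean value of $\mathrm{dn}^2$) self-contained, while the paper's version keeps the theta-function origin of the identity more visible. Either is acceptable here.
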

		\begin{proof}
			From the properties of the Jacobian function \cite{DFL-2013}  it can be deduced that		
			\begin{equation}\label{sn-thetapre}
				\partial_y^2\log\theta_4(y)=\frac{\theta_4^{\prime\prime}(0)}{\theta_4(0)}-\theta_2^4(0)\mathrm{sn}^2(\eta,k),
			\end{equation}
			in which,
			\begin{equation}
				\eta=\theta_3^2(0)y,\quad k=\frac{\theta_2^2(0)}{\theta_3^2(0)}.
			\end{equation}
			According to  the chain rule, equation \eqref{sn-thetapre} can be converted as
			\begin{equation}
				-k^2\mathrm{sn}^2(\eta,k)=-\frac{\Theta^{\prime\prime}(0)}{\Theta(0)}+\partial_{x_1}^2\ln\Theta(\eta).
			\end{equation}
			Finally, the relation \eqref{sn-theta} can be obtained through this Jacobi theta function crucial properties:
			\begin{equation}
				\frac{\Theta^{\prime\prime}(0)}{\Theta(0)}=1-\frac{E(k)}{K(k)}.
			\end{equation}
		\end{proof}	
		
		Thus, the solution \eqref{snu1} obtained by one-time DT can be converted as
		\begin{equation}\label{soldt1}
			u[1]=2\left(\frac{E(k)}{K(k)}-1\right)+2\partial_{x_1}^2\ln\Theta(\eta)+2\partial_{x_1}^2\ln\varphi.
		\end{equation}
		Here, we construct and analyze in detail two families of breather waves for the gKP equation \eqref{n+1kp}, parametrized by $\lambda$, $\lambda\in [\lambda_{1+},+\infty)$ and $\lambda\in [\lambda_{3+},\lambda_{2+}]$, based on the conclusion of Proposition \ref{propomega1}.

		$\bullet$ \textbf{Case 1.}  $\lambda\in [\lambda_{1+},+\infty)$
		
		Taking $\alpha=K(k)+iK^{\prime}(k)+\gamma$ and $\gamma=F(\phi_\gamma,k)\in[0,K(k))$ in the general solution \eqref{generalform}, and utilizing the half-periodic translations of Jacobi theta functions \cite{ISG-IMR-2014}:
		\begin{subequations}\label{theta1-theta4}
			\begin{align}
				&\theta_1\left(u+\frac12\pi\right)=\theta_2(u),\\
				&\theta_1\left(u+\frac12\pi\tau\right)=iq^{-1/4}\mathrm{exp}(-iu)\theta_4(u),\\
				&\theta_4\left(u+\frac12\pi\tau\right)=iq^{-1/4}\mathrm{exp}(-iu)\theta_1(u),
			\end{align}
		\end{subequations}
		where $q=\mathrm{exp}(i\pi\tau)=\mathrm{exp}(-\pi K^{\prime}(k)/K(k))$ is the Jacobi nome, the definition of functions $\theta_1$, $\theta_2$, $\theta_3$, $\theta_4$ and $q$ is given in the appendix, we have
		\begin{subequations}\label{halfperiodic}
			\begin{align}
				&Z(\alpha)=\frac{H_1'(\gamma)}{H_1(\gamma)}-\frac{i\pi}{2K(k)},\\
				&H(\eta+\alpha)=\mathrm{exp}\left(\frac{\pi K^{\prime}(k)}{4K(k)}-\frac{i\pi(\eta+\gamma)}{2K(k)}\right)\nonumber\\
				&\qquad\qquad\quad\times\Theta(\eta+K(k)+\gamma),\\
				&H(\eta-\alpha)=-\mathrm{exp}\left(\frac{\pi K^{\prime}(k)}{4K(k)}+\frac{i\pi(\eta-\gamma)}{2K(k)}\right)\nonumber\\
				&\qquad\qquad\quad\times\Theta(\eta+K(k)-\gamma).
			\end{align}
		\end{subequations}
		On the other hand, the constants $C_{\pm}$ are taken as
		\begin{equation}\label{c+c-}
			\begin{aligned}
				&C_+=\mathrm{exp}\left(-\frac{H_1^{\prime}(\gamma)}{H_1(\gamma)}K(k)\right),\\
				&C_-=-\mathrm{exp}\left(\frac{H_1^{\prime}(\gamma)}{H_1(\gamma)}K(k)\right).
			\end{aligned}    	
		\end{equation}
		Substituting \eqref{halfperiodic}, \eqref{c+c-} into solution \eqref{generalform} of the linear spectral problem \eqref{lp} yields
		\begin{equation}\label{generalsolve}
			\begin{aligned}
				&\varphi=\frac{\Theta(\eta+K(k)+\gamma)}{\Theta(\eta)}\mathrm{exp}\left(\frac{\pi K^{\prime}(k)}{4K(k)}-\frac{i\pi\gamma}{2K(k)}\right.\\
				&\qquad \left.-(\eta+K(k))\frac{H_1^{\prime}(\gamma)}{H_1(\gamma)}-\omega_1t-\sum_{i=2}^{n}\omega_{i}x_{i}\right)\\
				&\qquad +\frac{\Theta(\eta+K(k)-\gamma)}{\Theta(\eta)}\mathrm{exp}\left(\frac{\pi K^{\prime}(k)}{4K(k)}-\frac{i\pi\gamma}{2K(k)}\right.\\
				&\qquad \left.+(\eta+K(k))\frac{H_1^{\prime}(\gamma)}{H_1(\gamma)}+\omega_1t+\sum_{i=2}^{n}\omega_{i}x_{i}\right).
			\end{aligned}
		\end{equation}
		Furthermore, it can be deduced from the  properties \cite{PFB-MDF-2013}
		\begin{subequations}
			\begin{align}
				&H_{1}(\gamma)=\sqrt{\frac{k}{k^{\prime}}}\mathrm{cn}(\gamma,k)\Theta(\gamma),\\
				&\Theta_{1}(\gamma)=\frac{1}{\sqrt{k^{\prime}}}\mathrm{dn}(\gamma,k)\Theta(\gamma),
			\end{align}
		\end{subequations}
		that
		\begin{subequations}\label{HTheta}
			\begin{align}
				&\frac{H_1^{\prime}(\gamma)}{H_1(\gamma)}=-\frac{\mathrm{sn}(\gamma,k)\mathrm{dn}(\gamma,k)}{\mathrm{cn}(\gamma,k)}+Z(\gamma),\\
				&\frac{\Theta_1^{\prime}(\gamma)}{\Theta_1(\gamma)}=-\frac{k^2\mathrm{sn}(\gamma,k)\mathrm{cn}(\gamma,k)}{\mathrm{dn}(\gamma,k)}+Z(\gamma),
			\end{align}
		\end{subequations}
		the specific explanations of functions $\Theta_1$ and $H_1$ are provided in the appendix, where the Jacobi elliptic function is derivable from Proposition \ref{Propsin}:
		\begin{subequations}\label{sncndn}
			\begin{align}
				&\mathrm{sn}(\gamma,k)=\sin\phi_\gamma=\frac{\sqrt{\lambda+k^{2}+\delta \omega_{2}}}{\sqrt{\lambda+1+\delta \omega_{2}}},\\
				&\mathrm{cn}(\gamma,k)=\cos\phi_\gamma=\frac{\sqrt{1-k^{2}}}{\sqrt{\lambda+1+\delta \omega_{2}}},\\
				&\mathrm{dn}(\gamma,k)=\frac{\sqrt{1-k^2}\sqrt{\lambda+k^{2}+1+\delta \omega_{2}}}{\sqrt{\lambda+1+\delta \omega_{2}}}.
			\end{align}
		\end{subequations}
		Then, by tedious and direct calculation, equation \eqref{generalsolve} can be simplified as
		\begin{equation}\label{tau1}
			\varphi=\Theta(x_1+ct+\gamma)\mathrm{exp}(\kappa_1)+\Theta(x_1+ct-\gamma)\mathrm{exp}(-\kappa_1),
		\end{equation}
		in which,
		\begin{align*}
			&\kappa_1=\mu_1\big(x_1+c_1t+\sum_{i=2}^{n}p_{i}x_{i}\big),\qquad D=-\sum_{i=2}^{n}\sigma_{i}\omega_{i}\\
			&\mu_1=\frac{\sqrt{\lambda+1+\delta \omega_{2}+k^{2}}\sqrt{\lambda+\delta \omega_{2}+k^{2}}}{\sqrt{\lambda+1+\delta \omega_{2}}}-Z(\phi_\gamma,k),\\
			&c_1=c+4\beta\big(\sqrt{\lambda+1+\delta \omega_{2}+k^{2}}\sqrt{\lambda+\delta \omega_{2}+k^{2}}\\
			&\quad\ \ \ \times\sqrt{\lambda+1+\delta \omega_{2}}+D\big)/\mu_1,\\
			&\gamma=F(\phi_\gamma,k),\quad\quad p_{i}=\omega_{i}/\mu_1,
		\end{align*}
		for $i=2,3,\cdots,n$. $\beta, \delta, \omega_i, \sigma_i\in\mathbb{R}$ are arbitrary, $D$ is the dispersion parameter  and $\phi_\gamma\in[0,\frac{\pi}{2})$. The appropriate normalized phase shift for the background Jacobi elliptic wave is expressed as
		\begin{equation}
			\Delta_1=-\frac{2\pi\gamma}{K(k)}=-\frac{2\pi F(\phi_\gamma,k)}{K(k)}\in(-2\pi,0).
		\end{equation}
		When $\Delta_1\in(-\pi,0)$,  the normalized phase shift is negative. When $\Delta_1\in(-2\pi,-\pi]$  the normalized phase shift is considered to be positive by a period translation to $2\pi+\Delta_1\in (0,\pi)$.
		
		Thus, the  exact explicit expression corresponding to the form solution \eqref{soldt1} can be given in the following theorem.
		\begin{Theorem}	\label{Thsnu1b}
			For $\lambda\in[\lambda_{1+},+\infty)$, there exists an exact breather wave on the Jacobi elliptic function periodic wave background \eqref{seedsn} for the gKP equation \eqref{n+1kp} given as
			\begin{equation}\label{dt1snu1}
				u[1]=2\left(Z_{x_1}(\eta+\gamma)+\frac{E(k)}{K(k)}-1+\frac{F_1}{G_1}\right),
			\end{equation}
			where
			\begin{align*}
				&F_1=\left(Z_{x_1}(\eta-\gamma)-Z_{x_1}(\eta+\gamma)\right)\Theta^2(\eta-\gamma)\mathrm{exp}(-2\kappa_1) \\
				&\qquad\ +\big((Z(\eta+\gamma)-Z(\eta-\gamma)+2\mu_1)^2+Z_{x_1}(\eta-\gamma) \\
				&\qquad\ -Z_{x_1}(\eta+\gamma)\big)\Theta(\eta+\gamma)\Theta(\eta-\gamma), \\
				&G_1=\Theta^2(\eta+\gamma)\mathrm{exp}(2\kappa_1)+\Theta^2(\eta-\gamma)\mathrm{exp}(-2\kappa_1) \\
				&\qquad\ +2\Theta(\eta+\gamma)\Theta(\eta-\gamma),
			\end{align*}
			other quantities have been given in \eqref{tau1}.
		\end{Theorem}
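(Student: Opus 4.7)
The plan is to insert the eigenfunction \eqref{tau1} into the reduced DT formula \eqref{soldt1}, and then compute the second logarithmic derivative of $\varphi$ in closed form using Jacobi zeta calculus. Writing $\varphi = A + B$ with $A = \Theta(\eta+\gamma)\exp(\kappa_1)$, $B = \Theta(\eta-\gamma)\exp(-\kappa_1)$, and $\rho = B/A$, I would factor $\ln\varphi = \ln A + \ln(1+\rho)$ so as to isolate the leading contribution. Since $\partial_{x_1}\ln\Theta(u) = Z(u)$ by the very definition of the Jacobi zeta function, and $\partial_{x_1}\kappa_1 = \mu_1$, an immediate computation yields $(\ln A)_{x_1x_1} = Z_{x_1}(\eta+\gamma)$, which accounts for the leading piece displayed in \eqref{dt1snu1}.

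The computational heart is then $(\ln(1+\rho))_{x_1x_1}$. Direct differentiation of $\ln\rho = \ln B - \ln A$ gives $\rho_{x_1} = \rho\,\omega$ with $\omega = Z(\eta-\gamma) - Z(\eta+\gamma) - 2\mu_1$, and a further $x_1$-derivative yields $\rho_{x_1x_1} = \rho\,[\omega^2 + Z_{x_1}(\eta-\gamma) - Z_{x_1}(\eta+\gamma)]$. Substituting into the identity $(\ln(1+\rho))_{x_1x_1} = [\rho_{x_1x_1}(1+\rho) - \rho_{x_1}^2]/(1+\rho)^2$, the $B^2\omega^2$ contributions cancel; multiplying numerator and denominator by $A^2$ converts the denominator into $(A+B)^2 = G_1$, and the surviving numerator, after invoking the identifications $AB = \Theta(\eta+\gamma)\Theta(\eta-\gamma)$, $B^2 = \Theta^2(\eta-\gamma)\exp(-2\kappa_1)$ and $\omega^2 = (Z(\eta+\gamma)-Z(\eta-\gamma)+2\mu_1)^2$, reorganises exactly into $F_1$.

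Combining $(\ln\varphi)_{x_1x_1} = Z_{x_1}(\eta+\gamma) + F_1/G_1$ with the $\mathrm{sn}^2$-to-$\Theta$ conversion of Proposition \ref{propjt} then delivers \eqref{dt1snu1}, provided one is consistent in absorbing the $1/\Theta(\eta)$ prefactor of the Hermite-ansatz eigenfunction \eqref{generalform} into the reduced form \eqref{tau1}. The main obstacle I foresee is not any single differentiation but precisely this bookkeeping step: one has to track carefully how the background contribution $\partial_{x_1}^2\ln\Theta(\eta)$ supplied by the Proposition combines with the logarithmic derivative of the Hermite-ansatz eigenfunction under the half-period translations \eqref{halfperiodic}, and to confirm that the simplification \eqref{tau1} has already absorbed the $\Theta(\eta)$ denominator so that the DT operates on $A+B$ directly; otherwise a spurious $Z_{x_1}(\eta)$ would survive and the clean form \eqref{dt1snu1} would not emerge.
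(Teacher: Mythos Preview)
Your proposal is correct and follows essentially the same approach as the paper, which does not give a separate formal proof of the theorem but simply presents \eqref{dt1snu1} as the outcome of substituting the simplified eigenfunction \eqref{tau1} into the reduced DT formula \eqref{soldt1}. Your factorisation $\varphi = A(1+\rho)$ and the zeta-function bookkeeping make explicit the ``tedious and direct calculation'' the paper alludes to, and your concern about the $1/\Theta(\eta)$ prefactor is exactly the point: the passage from \eqref{generalsolve} to \eqref{tau1} silently drops that factor, and it is precisely the $2\partial_{x_1}^2\ln\Theta(\eta)$ term in \eqref{soldt1} that absorbs it, so no spurious $Z_{x_1}(\eta)$ survives.
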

		
It is worth noting that the explicit breather solution given in Theorem~\ref{Thsnu1b} contains rich physical information through its parameter dependence.
In particular, although the additional linear dispersion terms in Eq.~\eqref{n+1kp}, namely $\sum_{i=2}^{n} \sigma_i u_{x_1 x_i}$, are algebraically simple and can be removed by a Galilean-type coordinate transformation, they enter the solution explicitly through the breather velocity.
For the bright breather solution \eqref{dt1snu1}, the propagation velocity $c_1$ appearing in \eqref{tau1} depends linearly on the effective dispersion parameter $D=-\sum_{i=2}^{n}\sigma_i \omega_i$, which implies that the longitudinal propagation of the breather can be continuously tuned by adjusting the linear dispersion coefficients.
This velocity modulation reflects the coupling between longitudinal and transverse dispersive channels in the underlying spectral structure, while the intrinsic waveform profile of the breather remains unchanged.

		The propagation of the breather wave \eqref{dt1snu1} with respect to the time variable $t$ along different spatial directions $x_1$, $x_2$, and $x_3$, together with the corresponding contour plots, is illustrated in Fig.~\ref{fig1} for a representative set of parameters.
Figure~\ref{fig1}(a) shows a bright breather embedded in a periodic background wave, propagating with velocity $c$ along the $x_1$--$t$ plane.
The breather itself is characterized by an inverse width $\mu_1$, a propagation velocity $c_1$, and a phase shift of $-2\gamma$.
Similarly, Figs.~\ref{fig1}(b) and \ref{fig1}(c) display the evolution of the same solution in the $x_2$--$t$ and $x_3$--$t$ planes, respectively.
In these cases, the inverse widths of the bright breathers are given by $\omega_i~(i=2,3)$, while their propagation velocities are $c_1/p_i~(i=2,3)$, both accompanied by the same phase shift $-2\gamma$.
By appropriately choosing the parameters in solution \eqref{dt1snu1}, the dynamical behaviors of the breather wave along other spatial directions can be readily obtained, providing flexible physical interpretations in different multidimensional settings.

\begin{figure*}[h]
\includegraphics[width=180mm]{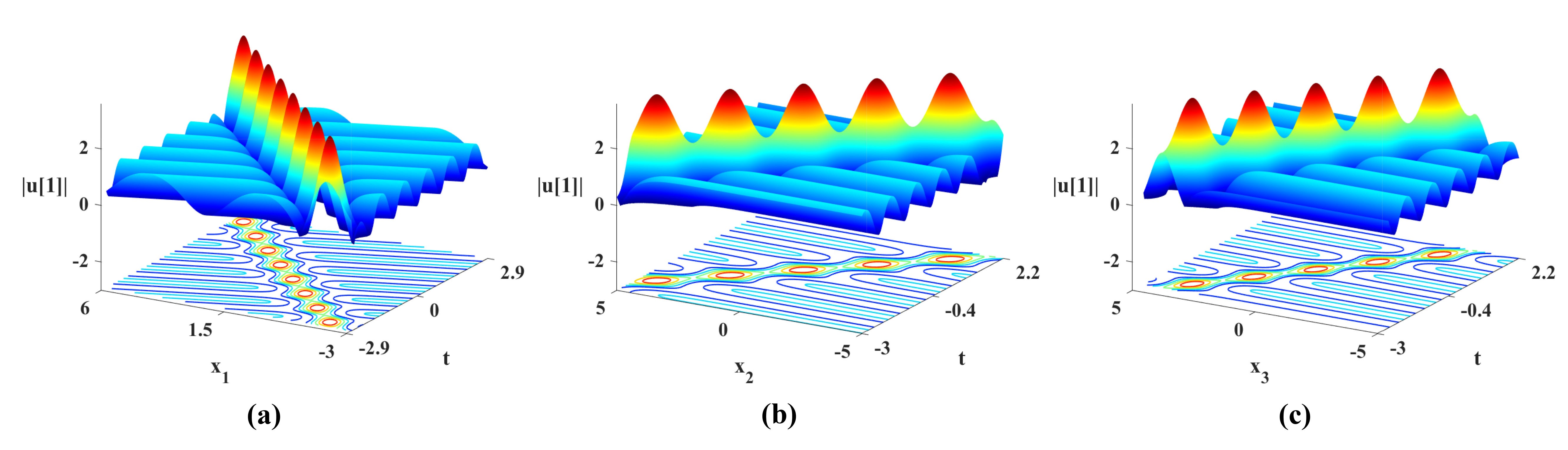}
\caption{Propagation profiles and corresponding contour plots of the breather wave solution \eqref{dt1snu1}.
The parameters are chosen as $k=0.3$, $\alpha=\pi/3$, $\beta=1$, $\lambda=1.8$, $\delta=1$, $\omega_2=\omega_3=1$, and $D=5$.
\textbf{(a)} Evolution in the $x_1$--$t$ plane with $x_2=x_3=1$;
\textbf{(b)} evolution in the $x_2$--$t$ plane with $x_1=x_3=1$;
\textbf{(c)} evolution in the $x_3$--$t$ plane with $x_1=x_2=1$.}
\label{fig1}
\end{figure*}
		
\begin{figure*}[h]
\includegraphics[width=180mm]{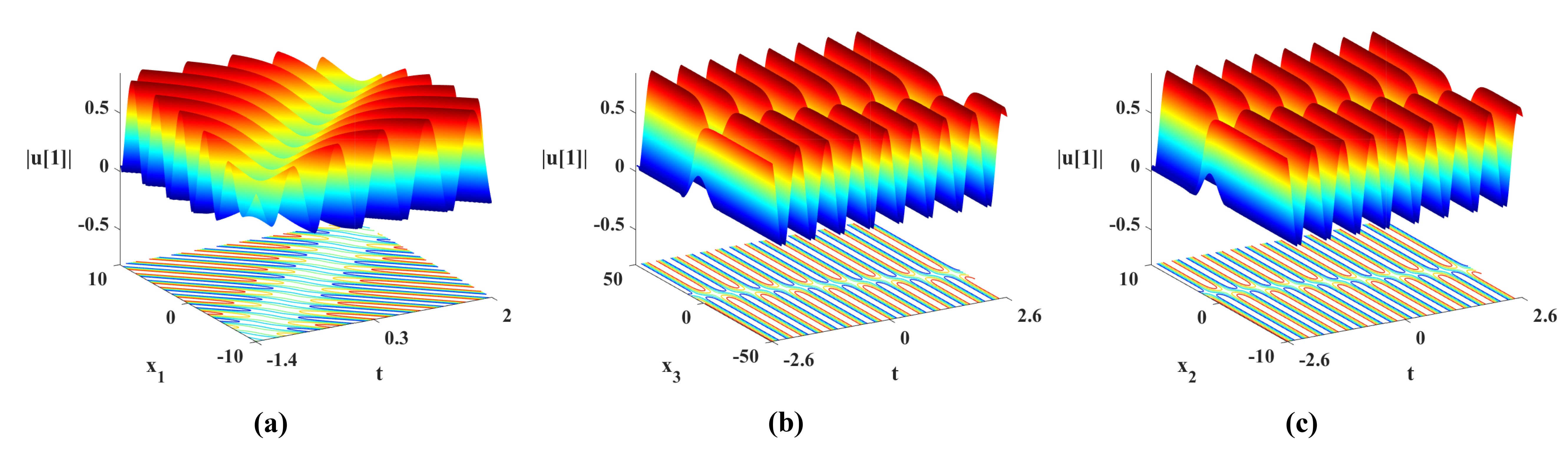}
\caption{Propagation profiles and corresponding contour plots of the dark breather solution \eqref{dt1snu1d}.
The parameters are chosen as $k=0.6$, $\alpha=\pi/3$, $\beta=1$, $\lambda=-2.1$, $\delta=1$, $\omega_2=\omega_3=1$, and $D=5$.
\textbf{(a)} Evolution in the $x_1$--$t$ plane with $x_2=x_3=1$;
\textbf{(b)} evolution in the $x_2$--$t$ plane with $x_1=x_3=1$;
\textbf{(c)} evolution in the $x_3$--$t$ plane with $x_1=x_2=1$.}
\label{fig2}
\end{figure*}
		
		$\bullet$ \textbf{Case 2.} $\lambda\in [\lambda_{3+},\lambda_{2+}]$
		
		Similar to the previous case, considering the general solution \eqref{generalform} of system \eqref{lp} in Theorem \ref{Th1} with  $\alpha=F(\phi_\alpha,k)\in[0,K(k)]$, and using the half-periodic translations of Jacobi theta functions \eqref{theta1-theta4}, we have
		\begin{subequations}\label{halfperiodic1}
			\begin{align}
				&H(x+\alpha)=i\mathrm{exp}\left(-\frac{\pi K^{\prime}(k)}{4K(k)}-\frac{i\pi(x+\alpha)}{2K(k)}\right)\nonumber\\
				&\qquad\qquad\quad\times\Theta(x+\alpha-iK^{\prime}(k)),\\
				&H(x-\alpha)=i\mathrm{exp}\left(-\frac{\pi K^{\prime}(k)}{4K(k)}-\frac{i\pi(x-\alpha)}{2K(k)}\right)\nonumber\\
				&\qquad\qquad\quad\times\Theta(x-\alpha-iK^{\prime}(k)).
			\end{align}
		\end{subequations}
		At the same time, the constants $C_{\pm}$ are taken as
		\begin{equation}\label{c+c-1}
			\begin{aligned}
				&C_+=\mathrm{exp}\left(iK^{\prime}(k)Z(\alpha)+\frac{i\pi\alpha}{2K(k)}\right),\\
				&C_-=\mathrm{exp}\left(-iK^{\prime}(k)Z(\alpha)-\frac{i\pi\alpha}{2K(k)}\right).
			\end{aligned}
		\end{equation}
		Substituting equations \eqref{halfperiodic1} and \eqref{c+c-1} into solution \eqref{generalform}, the eigenfunction can be simplified as
		\begin{equation}\label{generalsolve1}
			\begin{aligned}
				&\varphi=\frac{\Theta(\eta-iK^{\prime}(k)+\alpha)}{\Theta(\eta)}\mathrm{exp}\left(\frac{\pi K^{\prime}(k)}{4K(k)}-\frac{i\pi\eta}{2K(k)}\right.\\
				&\left.\qquad-(\eta-iK^{\prime}(k))Z(\alpha)-\omega_1t-\sum_{i=2}^{n}\omega_{i}x_{i}\right)\\
				&\qquad\ +\frac{\Theta(\eta-iK^{\prime}(k)-\alpha)}{\Theta(\eta)}\mathrm{exp}\left(\frac{\pi K^{\prime}(k)}{4K(k)}-\frac{i\pi\eta}{2K(k)}\right.\\
				&\qquad \left.+(\eta-iK^{\prime}(k))Z(\alpha)+\omega_1t+\sum_{i=2}^{n}\omega_{i}x_{i}\right).
			\end{aligned}
		\end{equation}
		On the other hand, it can be directly deduced from the property \cite{PFB-MDF-2013}
		\begin{equation}
			H(\alpha)=\sqrt{k}\mathrm{sn}(\alpha,k)\Theta(\alpha),
		\end{equation}
		that
		\begin{equation}\label{HTheta1}
			\frac{H^{\prime}(\alpha)}{H(\alpha)}=\frac{\mathrm{cn}(\alpha,k)\mathrm{dn}(\alpha,k)}{\mathrm{sn}(\alpha,k)}+Z(\alpha),
		\end{equation}
		where the Jacobi elliptic function is derivable from Proposition \ref{Propsin}:
		\begin{subequations}\label{sncndn1}
			\begin{align}
				&\mathrm{sn}(\gamma,k)=\sin\phi_\gamma=\frac{\sqrt{\lambda+1+\delta \omega_{2}+k^{2}}}{k},\\
				&\mathrm{cn}(\gamma,k)=\cos\phi_\gamma=\frac{\sqrt{-\lambda-1-\delta \omega_{2}}}{k},\\
				&\mathrm{dn}(\gamma,k)=\sqrt{-\lambda-\delta \omega_{2}-k^2}.
			\end{align}
		\end{subequations}
		Then, based on the technique of half-periodic translation, the eigenfunction $\varphi$ can be simplified by directly computation as
		\begin{equation}\label{tau2}
			\varphi=\Theta(x_1+ct+\alpha)\mathrm{exp}(-\kappa_2)+\Theta(x_1+ct-\alpha)\mathrm{exp}(\kappa_2),
		\end{equation}
		in which,
		\begin{align*}
			&\kappa_2=\mu_2\big(x_1+c_2t+\sum_{i=2}^{n}p_{i}x_{i}\big),\\
			&c_2=c-4\beta\big(\sqrt{\lambda+1+\delta \omega_{2}+k^{2}}\sqrt{\lambda+\delta \omega_{2}+k^{2}}\\
			&\quad\ \ \ \times\sqrt{\lambda+1+\delta \omega_{2}}+D)/\mu_2,\\
			&\mu_2=Z(\phi_\alpha,k),\quad \alpha=F(\phi_\alpha,k),\quad p_{i}=\omega_{i}/\mu_2,
		\end{align*}
		for $i=2,3,\cdots,n$. $\beta, \delta, \omega_i, \sigma_i \in\mathbb{R}$ are arbitrary,  $D$ is the dispersion parameter  and $\phi_\gamma\in[0,\frac{\pi}{2}]$. The appropriate normalized phase shift for the background Jacobi elliptic wave is expressed as follows:
		\begin{equation}
			\Delta_2=\frac{2\pi\alpha}{K(k)}=\frac{2\pi F(\phi_\alpha,k)}{K(k)}\in(0,2\pi).
		\end{equation}
		When $\Delta_2\in(0,\pi]$,  the normalized phase shift is negative. When $\Delta_2\in(\pi,2\pi)$  the normalized phase shift is considered to be positive by a period translation to $\Delta_2-2\pi\in(-\pi,0)$.
		
		Substituting function \eqref{tau2} into equation \eqref{soldt1}, a new breather solution can be obtained in the following theorem.
		\begin{Theorem}	\label{Thsnu1d}
			For $\lambda\in [\lambda_{3+},\lambda_{2+}]$, there exists an exact breather wave on the Jacobi elliptic function periodic wave background \eqref{seedsn} for the gKP equation \eqref{n+1kp} given as
			\begin{equation}\label{dt1snu1d}
				u[1]=2\left(Z_{x_1}(\eta+\alpha)+\frac{E(k)}{K(k)}-1+\frac{F_2}{G_2}\right),
			\end{equation}
			where
			\begin{align*}
				&F_2=\left(Z_{x_1}(\eta-\alpha)-Z_{x_1}(\eta+\alpha)\right)\Theta^2(\eta-\alpha)\mathrm{exp}(2\kappa_1)\\
				&\qquad\ +\left((Z(\eta+\alpha)-Z(\eta-\alpha)-2\mu_1)^2+Z_{x_1}(\eta-\alpha)\right.\\
				&\qquad\ \left.-Z_{x_1}(\eta+\alpha)\right)\Theta(\eta+\alpha)\Theta(\eta-\alpha),\\
				&G_2=\Theta^2(\eta+\alpha)\mathrm{exp}(-2\kappa_1)+\Theta^2(\eta-\alpha)\mathrm{exp}(2\kappa_1)\\
				&\qquad\ +2\Theta(\eta+\alpha)\Theta(\eta-\alpha),
			\end{align*}
			other quantities have been given in \eqref{tau2}.
		\end{Theorem}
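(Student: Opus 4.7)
The plan is to derive \eqref{dt1snu1d} by specializing the one-time Darboux transformation formula \eqref{soldt1} to the spectral regime $\lambda\in[\lambda_{3+},\lambda_{2+}]$, following the same template used in Case 1 for Theorem \ref{Thsnu1b}, but with the different parametrization $\alpha=F(\phi_\alpha,k)\in[0,K(k)]$ from Proposition \ref{Propsin}(i). My first goal is to bring the generic eigenfunction \eqref{generalform} of Theorem \ref{Th1} into the compact bilinear form \eqref{tau2}. The key technical tool is the set of half-periodic translation identities \eqref{theta1-theta4} for Jacobi theta functions. Applied to $H(\eta\pm\alpha)$ with a shift of $-iK'(k)$, these convert $H$ to $\Theta$ and produce \eqref{halfperiodic1}. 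Choosing the free constants $C_\pm$ as in \eqref{c+c-1} makes the imaginary phase prefactors coming from the half-period translation collapse, reducing $\varphi$ to the intermediate form \eqref{generalsolve1}.

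My second step is to simplify the exponent inside \eqref{generalsolve1}. The logarithmic derivative identity \eqref{HTheta1} for $H$, combined with the explicit evaluations \eqref{sncndn1} of $\mathrm{sn}(\alpha,k)$, $\mathrm{cn}(\alpha,k)$, $\mathrm{dn}(\alpha,k)$ from Proposition \ref{Propsin}, identifies the coefficient of $\eta$ in the exponent as $\mu_2=Z(\phi_\alpha,k)$. The time coefficient $\omega_1$ in \eqref{omega1}, after substituting $\lambda_+=\lambda_- -2\delta\omega_2$ and $c=4\beta(k^2+1)$, collapses into the shape $\mu_2 c_2-\sum_{i=2}^n\sigma_i\omega_i$ required for the linear combination inside the exponential to become $\pm\kappa_2$. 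A final half-period translation absorbs the residual $-iK'(k)$ shift inside the arguments of $\Theta$ and produces \eqref{tau2}.

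With \eqref{tau2} established, I would compute $\partial_{x_1}^2\ln\varphi$ directly. Writing $\varphi=\Theta(\eta+\alpha)e^{-\kappa_2}+\Theta(\eta-\alpha)e^{\kappa_2}$ and using $(\ln\Theta(\eta\pm\alpha))_{x_1}=Z(\eta\pm\alpha)$ together with $(\ln\Theta(\eta\pm\alpha))_{x_1x_1}=Z_{x_1}(\eta\pm\alpha)$, one differentiation yields a rational expression in $e^{\pm 2\kappa_2}$; a second differentiation followed by clearing denominators produces the squared cross-term $(Z(\eta+\alpha)-Z(\eta-\alpha)-2\mu_2)^2$, which is exactly the combination appearing in $F_2$. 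The denominator matches $G_2$ up to the common factor $\Theta(\eta+\alpha)\Theta(\eta-\alpha)$. Adding the background contribution $2(E(k)/K(k)-1)+2\partial_{x_1}^2\ln\Theta(\eta)$ from Proposition \ref{propjt}, and regrouping so that the $Z_{x_1}$ term is referenced at the shifted argument $\eta+\alpha$, yields precisely \eqref{dt1snu1d}.

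The main obstacle is the bookkeeping in the second step, where numerous imaginary phase factors such as $\exp(\pm i\pi\eta/(2K(k)))$ and the half-period translation constants $\exp(\pm\pi K'(k)/(4K(k)))$ must cancel exactly between the two terms of \eqref{generalform}. This cancellation is enforced by the specific choice \eqref{c+c-1} of $C_\pm$ together with the reality of $\omega_1$ guaranteed by Proposition \ref{propomega1} in this regime. Once the exponential and prefactor algebra is carried out, the remaining differentiation and grouping that produces $F_2$ and $G_2$ is mechanical, and the structural parallel with the derivation of \eqref{dt1snu1} in Case 1 makes the identification of each term in \eqref{dt1snu1d} transparent.
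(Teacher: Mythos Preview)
Your proposal is correct and follows essentially the same approach as the paper: the derivation preceding the theorem in Case~2 of Section~\ref{sec5} carries out exactly the steps you outline---half-periodic translation \eqref{halfperiodic1}, the choice \eqref{c+c-1} of $C_\pm$, the logarithmic derivative identity \eqref{HTheta1} together with \eqref{sncndn1}, and a final half-period shift to reach \eqref{tau2}---after which substitution into \eqref{soldt1} and direct differentiation produces \eqref{dt1snu1d}. Your description of the second-derivative computation that isolates $F_2$ and $G_2$ is slightly more explicit than the paper's, which simply says the result follows ``by directly computation''; note also that the $\mu_1,\kappa_1$ appearing in the displayed $F_2,G_2$ are almost certainly typographical for $\mu_2,\kappa_2$, consistent with your reading.
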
	

It is observed from the explicit form of solution \eqref{dt1snu1d} that the linear dispersion coefficients enter the dark breather dynamics through the propagation velocity.
In particular, the velocity parameter $c_2$ appearing in \eqref{tau2} depends on the effective dispersion parameter $D=-\sum_{i=2}^{n}\sigma_i \omega_i$, which indicates that linear dispersion also affects the longitudinal propagation of dark breather waves.
This dependence provides a direct interpretation of how dispersion parameters are reflected in the dynamical behavior of the solution, while the overall breather structure remains unchanged.

		Figure~\ref{fig2} presents the evolution profiles and corresponding contour plots of the solution \eqref{dt1snu1d} in the $x_1$--$t$, $x_2$--$t$, and $x_3$--$t$ planes under a specified parameter configuration.
As shown in Fig.~\ref{fig2}(a), a dark breather structure is formed on the same periodic background, propagating with velocity $c$ along the $x_1$--$t$ direction.
The dark breather is characterized by an inverse width $\mu_2$, a propagation velocity $c_2$, and a phase shift of $-2\alpha$.
Likewise, Figs.~\ref{fig2}(b) and \ref{fig2}(c) demonstrate the breather dynamics in the $x_2$--$t$ and $x_3$--$t$ planes, respectively.
The inverse widths are given by $\omega_i~(i=2,3)$, while the corresponding propagation velocities take the form $c_2/p_i~(i=2,3)$, both sharing the same phase shift $-2\alpha$.
Different choices of parameters in solution \eqref{dt1snu1d} allow one to describe a variety of dark breather evolution scenarios, which are relevant to different physical realizations.
		
		Finally, spectral analysis indicates that on the Jacobi elliptic function periodic background \eqref{seedsn}, the nature of breather solutions is determined by the location of the spectral parameter $\lambda$.
Specifically, when $\lambda \in [\lambda_{1+}, +\infty)$, the gKP equation \eqref{n+1kp} admits bright breather waves, whereas for $\lambda \in [\lambda_{3+}, \lambda_{2+}]$, dark breather waves are generated.

		\subsection {Second order breather waves on the traveling Jacobi elliptic function periodic wave background}
		
		For $N=2$, taking two eigenfunctions $\varphi_{1}$, $\varphi_{2}$ for system \eqref{lp} corresponding to two different eigenvalues $\lambda=\lambda_1$, $\lambda=\lambda_2$ and $N=2$ into the DT \eqref{DTn2}, the second order breather waves on the traveling Jacobi elliptic function periodic wave background \eqref{seedsn} for the gKP equation \eqref{n+1kp} can be deduced as
		\begin{equation}\label{soldt2}
			u[2]=2\left(\frac{E(k)}{K(k)}-1\right)+2\partial_{x_1}^2\ln\Theta(\eta)+2\partial_{x_1}^2\ln \bar{\varphi},	
		\end{equation}
		where $\bar{\varphi}=\mathrm{Wr}[\varphi_{1},\varphi_{2}]$. Based on the conclusion of Corollary 3, in order to facilitate the analysis of the dynamic behavior of the breather wave \eqref{soldt2}, the cases of $\lambda$ in the two intervals of $[\lambda_{1+},+\infty)$ and $[\lambda_{3+},\lambda_{2+}]$ are discussed, respectively.
		
		$\bullet$ \textbf{Case 1.} $\lambda\in [\lambda_{1+},+\infty)$
		
		Simplifying two eigenfunctions $\varphi_{1}$ and $\varphi_{2}$ by the technique similar to the process of \eqref{theta1-theta4}-\eqref{sncndn}, and applying the properties of determinants and the relationship between the seed solution and the Jacobi theta function in Proposition \ref{propjt}, the function $\bar{\varphi}$  can be derived as
		\begin{equation}\label{bartau}
			\begin{aligned}
				&\bar{\varphi}=(Z(\eta+\gamma_{2})-Z(\eta+\gamma_{1})-\bar{\mu}_{1}+\bar{\mu}_{2})\\
				&\quad\ \ \times\Theta(\eta+\gamma_{1})\Theta(\eta+\gamma_{2})\mathrm{exp}(\bar{\kappa}_{1}+\bar{\kappa}_{2}) \\
				&\quad\ \ +(Z(\eta-\gamma_{2})-Z(\eta+\gamma_{1})+\bar{\mu}_{1}+\bar{\mu}_{2})\\
				&\quad\ \ \times\Theta(\eta+\gamma_{1})\Theta(\eta-\gamma_{2})\mathrm{exp}(\bar{\kappa}_{1}-\bar{\kappa}_{2}) \\
				&\quad\ \ +(Z(\eta+\gamma_{2})-Z(\eta-\gamma_{1})+\bar{\mu}_{1}+\bar{\mu}_{2})\\
				&\quad\ \ \times\Theta(\eta-\gamma_{1})\Theta(\eta+\gamma_{2})\mathrm{exp}(-\bar{\kappa}_{1}+\bar{\kappa}_{2}) \\
				&\quad\ \ +(Z(\eta-\gamma_{2})-Z(\eta-\gamma_{1})-\bar{\mu}_{1}+\bar{\mu}_{2})\\
				&\quad\ \ \times\Theta(\eta-\gamma_{1})\Theta(\eta-\gamma_{2})\mathrm{exp}(-\bar{\kappa}_{1}-\bar{\kappa}_{2}),
			\end{aligned}
		\end{equation}
		in which,
		\begin{align*}
			&\bar{\kappa}_j=\bar{\mu}_j\big(x_1+\bar{c}_jt+\sum_{i=2}^{n}\bar{p}_{ij}x_{i}\big),\\
			&\bar{\mu}_j=\frac{\sqrt{\lambda_j+1+\delta \omega_{2}+k^{2}}\sqrt{\lambda_j+\delta \omega_{2}+k^{2}}}{\sqrt{\lambda_j+1+\delta \omega_{2}}}-Z(\phi_\gamma,k),\\
			&\bar{c}_j=c+4\beta\big(\sqrt{\lambda_j+1+\delta \omega_{2}+k^{2}}\sqrt{\lambda_j+\delta \omega_{2}+k^{2}}\\
			&\quad\ \ \ \times\sqrt{\lambda_j+1+\delta \omega_{2}}+D\big)/\bar{\mu}_j,\\
			&\gamma_j=F(\phi_{\gamma_{j}},k),\quad\quad \bar{p}_{ij}=\omega_{i}/\bar{\mu}_j,
		\end{align*}
		for $i=2,3,\cdots,n$, and $j=1,2$. $\beta, \delta, \omega_i, \sigma_i \in\mathbb{R}$ are arbitrary.
		Then, the second order bright breather wave for the gKP equation \eqref{n+1kp} can be given by the following theorem.
		\begin{Theorem}	\label{Thsnu2}
			The second order bright breather wave exact solution on the traveling Jacobi elliptic function periodic wave background \eqref{seedsn} for the gKP equation \eqref{n+1kp} under $\lambda\in[\lambda_{1+},+\infty)$ can be given as
			\begin{equation}\label{dt2snu2}
				u[2]=2\left(\frac{E(k)}{K(k)}-1+\frac{\bar{F}}{\bar{\varphi}}-\frac{\bar{G}^2}{\bar{\varphi}^2}\right),
			\end{equation}
			where
			\begin{align*}
				&\bar{F}=\left(Z_{x_1}(\eta+\gamma_2)\left(3Z(\eta+\gamma_2)+3\bar{\mu}_2+Z(\eta+\gamma_1)\right.\right. \\
				&\quad\ \ \left.\left.+\bar{\mu}_1\right)-Z_{x_1}(\eta+\gamma_1)\left(3Z(\eta+\gamma_1)+3\bar{\mu}_1\right.\right. \\
				&\quad\ \ \left.\left.+Z(\eta+\gamma_2)+\bar{\mu}_2\right)+\left(Z(\eta+\gamma_2)-Z(\eta+\gamma_1)\right.\right. \\
				&\quad\ \ \left.\left.-\bar{\mu}_1+\bar{\mu}_2\right)\left(Z(\eta+\gamma_1)+Z(\eta+\gamma_2)+\bar{\mu}_1+\bar{\mu}_2\right)^2\right.\\
				&\quad\ \ \left.+Z_{x_1x_1}(\eta+\gamma_2)-Z_{x_1x_1}(\eta+\gamma_1)\right)\Theta(\eta+\gamma_{1})\\
				&\quad\ \ \times\Theta(\eta+\gamma_{2})\mathrm{exp}(\bar{\kappa}_{1}+\bar{\kappa}_{2})\\
				&\quad\ \ +\left(Z_{x_1}(\eta-\gamma_2)\left(3Z(\eta-\gamma_2)+3\bar{\mu}_1+Z(\eta+\gamma_1)\right.\right. \\
				&\quad\ \ \left.\left.-\bar{\mu}_2\right)-Z_{x_1}(\eta+\gamma_1)\left(3Z(\eta+\gamma_1)-3\bar{\mu}_2\right.\right. \\
				&\quad\ \ \left.\left.+Z(\eta-\gamma_2)+\bar{\mu}_1\right)+\left(Z(\eta-\gamma_2)-Z(\eta+\gamma_1)\right.\right. \\
				&\quad\ \ \left.\left.+\bar{\mu}_1+\bar{\mu}_2\right)\left(Z(\eta+\gamma_1)+Z(\eta-\gamma_2)+\bar{\mu}_1-\bar{\mu}_2\right)^2\right.\\
				&\quad\ \ \left.+Z_{x_1x_1}(\eta-\gamma_2)-Z_{x_1x_1}(\eta+\gamma_1)\right)\Theta(\eta+\gamma_{1})\\
				&\quad\ \ \times\Theta(\eta-\gamma_{2})\mathrm{exp}(\bar{\kappa}_{1}-\bar{\kappa}_{2})\\
				&\quad\ \ +\left(Z_{x_1}(\eta+\gamma_2)\left(3Z(\eta+\gamma_2)-3\bar{\mu}_1+Z(\eta-\gamma_1)\right.\right. \\
				&\quad\ \ \left.\left.+\bar{\mu}_2\right)-Z_{x_1}(\eta-\gamma_1)\left(3Z(\eta-\gamma_1)+3\bar{\mu}_2\right.\right. \\
				&\quad\ \ \left.\left.+Z(\eta+\gamma_2)-\bar{\mu}_1\right)+\left(Z(\eta+\gamma_2)-Z(\eta-\gamma_1)\right.\right. \\
				&\quad\ \ \left.\left.-\bar{\mu}_1+\bar{\mu}_2\right)\left(Z(\eta+\gamma_2)-Z(\eta-\gamma_1)-\bar{\mu}_1+\bar{\mu}_2\right)^2\right.\\
				&\quad\ \ \left.+Z_{x_1x_1}(\eta+\gamma_2)-Z_{x_1x_1}(\eta-\gamma_1)\right)\Theta(\eta-\gamma_{1})\\
				&\quad\ \ \times\Theta(\eta+\gamma_{2})\mathrm{exp}(-\bar{\kappa}_{1}+\bar{\kappa}_{2})\\
				&\quad\ \ +\left(Z_{x_1}(\eta-\gamma_2)\left(3Z(\eta-\gamma_2)-3\bar{\mu}_2+Z(\eta+\gamma_1)\right.\right. \\
				&\quad\ \ \left.\left.-\bar{\mu}_1\right)-Z_{x_1}(\eta-\gamma_1)\left(3Z(\eta-\gamma_1)-3\bar{\mu}_1\right.\right. \\
				&\quad\ \ \left.\left.+Z(\eta-\gamma_2)-\bar{\mu}_2\right)+\left(Z(\eta-\gamma_2)-Z(\eta-\gamma_1)\right.\right. \\
				&\quad\ \ \left.\left.-\bar{\mu}_1+\bar{\mu}_2\right)\left(Z(\eta-\gamma_1)+Z(\eta-\gamma_2)-\bar{\mu}_1-\bar{\mu}_2\right)^2\right.\\
				&\quad\ \ \left.+Z_{x_1x_1}(\eta-\gamma_2)-Z_{x_1x_1}(\eta-\gamma_1)\right)\Theta(\eta-\gamma_{1})\\
				&\quad\ \ \times\Theta(\eta-\gamma_{2})\mathrm{exp}(-\bar{\kappa}_{1}-\bar{\kappa}_{2}), \\
				\\
				&\bar{G}=\left(\left(Z(\eta+\gamma_2)-Z(\eta+\gamma_1)-\bar{\mu}_1+\bar{\mu}_2\right)\left(Z(\eta+\gamma_1)\right.\right. \\
				&\quad\ \ \left.\left.+Z(\eta+\gamma_2)+\bar{\mu}_1+\bar{\mu}_2\right)^2+Z_{x_1}(\eta+\gamma_2)\right.\\
				&\quad\ \ \left.-Z_{x_1}(\eta+\gamma_1)\right) \Theta(\eta+\gamma_{1})\Theta(\eta+\gamma_{2})\mathrm{exp}(\bar{\kappa}_{1}+\bar{\kappa}_{2})\\	
				&\quad\ \  +\left(\left(Z(\eta-\gamma_2)-Z(\eta+\gamma_1)+\bar{\mu}_1+\bar{\mu}_2\right)\left(Z(\eta+\gamma_1)\right.\right. \\
				&\quad\ \  \left.\left.+Z(\eta-\gamma_2)+\bar{\mu}_1-\bar{\mu}_2\right)^2+Z_{x_1}(\eta-\gamma_2)\right.\\
				&\quad\ \  \left.-Z_{x_1}(\eta+\gamma_1)\right) \Theta(\eta+\gamma_{1})\Theta(\eta-\gamma_{2})\mathrm{exp}(\bar{\kappa}_{1}-\bar{\kappa}_{2})\\
				&\quad\ \  +\left(\left(Z(\eta+\gamma_2)-Z(\eta-\gamma_1)+\bar{\mu}_1+\bar{\mu}_2\right)\left(Z(\eta-\gamma_1)\right.\right. \\
				&\quad\ \  \left.\left.+Z(\eta+\gamma_2)-\bar{\mu}_1+\bar{\mu}_2\right)^2+Z_{x_1}(\eta+\gamma_2)\right.\\
				&\quad\ \  \left.-Z_{x_1}(\eta-\gamma_1)\right) \Theta(\eta-\gamma_{1})\Theta(\eta+\gamma_{2})\mathrm{exp}(-\bar{\kappa}_{1}+\bar{\kappa}_{2})\\
				&\quad\ \  +\left(\left(Z(\eta-\gamma_2)-Z(\eta-\gamma_1)-\bar{\mu}_1+\bar{\mu}_2\right)\left(Z(\eta-\gamma_1)\right.\right. \\
				&\quad\ \  \left.\left.+Z(\eta-\gamma_2)-\bar{\mu}_1-\bar{\mu}_2\right)^2+Z_{x_1}(\eta-\gamma_2)\right.\\
				&\quad\ \  \left.-Z_{x_1}(\eta-\gamma_1)\right) \Theta(\eta-\gamma_{1})\Theta(\eta-\gamma_{2})\mathrm{exp}(-\bar{\kappa}_{1}-\bar{\kappa}_{2}),
			\end{align*}
			other quantities have been given in \eqref{bartau}.
		\end{Theorem}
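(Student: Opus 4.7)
The plan is to apply Theorem~\ref{Thdt} at $N=2$ and convert the resulting Wronskian formula into the explicit Jacobi-theta expression claimed. Substituting the seed $u=-2k^{2}\mathrm{sn}^{2}(\eta,k)$ and invoking Proposition~\ref{propjt} already turns \eqref{DTn2} into the intermediate representation \eqref{soldt2}, so the task reduces to evaluating $2\partial_{x_{1}}^{2}\ln\Theta(\eta)+2\partial_{x_{1}}^{2}\ln\bar{\varphi}$ with $\bar{\varphi}=\mathrm{Wr}[\varphi_{1},\varphi_{2}]$ and recognising it as $2(\bar{F}/\bar{\varphi}-\bar{G}^{2}/\bar{\varphi}^{2})$.

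My first concrete step is to build $\varphi_{1},\varphi_{2}$ explicitly. For two distinct spectral parameters $\lambda_{1},\lambda_{2}\in[\lambda_{1+},+\infty)$, I replay the Case~1 simplification already carried out in Theorem~\ref{Thsnu1b}: start from the general representation \eqref{generalform} with $\alpha_{j}=K(k)+iK'(k)+\gamma_{j}$, apply the half-period translation formulas \eqref{halfperiodic}, fix the constants $C_{\pm}$ as in \eqref{c+c-}, and use \eqref{HTheta}--\eqref{sncndn}. This yields $\varphi_{j}=\Theta(\eta+\gamma_{j})e^{\bar{\kappa}_{j}}+\Theta(\eta-\gamma_{j})e^{-\bar{\kappa}_{j}}$ with the parameters $(\bar{\mu}_{j},\bar{c}_{j},\gamma_{j},\bar{p}_{ij})$ listed after \eqref{bartau}. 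Then $\bar{\varphi}=\varphi_{1}\varphi_{2,x_{1}}-\varphi_{2}\varphi_{1,x_{1}}$ is computed directly using only $\partial_{x_{1}}\Theta(\eta\pm\gamma_{j})=Z(\eta\pm\gamma_{j})\Theta(\eta\pm\gamma_{j})$ and $\partial_{x_{1}}e^{\pm\bar{\kappa}_{j}}=\pm\bar{\mu}_{j}e^{\pm\bar{\kappa}_{j}}$: the $2\times 2$ determinant expands into the four Theta-exponential summands of \eqref{bartau}, each carrying a telescoped scalar prefactor of the form $Z(\eta+\epsilon_{2}\gamma_{2})-Z(\eta+\epsilon_{1}\gamma_{1})-\epsilon_{1}\bar{\mu}_{1}+\epsilon_{2}\bar{\mu}_{2}$ with $\epsilon_{1},\epsilon_{2}\in\{\pm1\}$.

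The closing step is the logarithmic differentiation. Writing \eqref{bartau} as $\bar{\varphi}=\sum A_{\epsilon_{1}\epsilon_{2}}T_{\epsilon_{1}\epsilon_{2}}$ with $T_{\epsilon_{1}\epsilon_{2}}$ the Theta-exponential base and $A_{\epsilon_{1}\epsilon_{2}}$ its scalar prefactor, the Leibniz rule gives $\bar{\varphi}_{x_{1}}=\sum(A'+AB)T$ and $\bar{\varphi}_{x_{1}x_{1}}=\sum(A''+2A'B+AB'+AB^{2})T$, where $B_{\epsilon_{1}\epsilon_{2}}=Z(\eta+\epsilon_{1}\gamma_{1})+Z(\eta+\epsilon_{2}\gamma_{2})+\epsilon_{1}\bar{\mu}_{1}+\epsilon_{2}\bar{\mu}_{2}$ is the log-derivative of $T$. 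Combining with $2\partial_{x_{1}}^{2}\ln\Theta(\eta)=2Z_{x_{1}}(\eta)$ via the identity $\partial_{x_{1}}^{2}\ln(\Theta(\eta)\bar{\varphi})=\Phi_{x_{1}x_{1}}/\Phi-(\Phi_{x_{1}}/\Phi)^{2}$ with $\Phi=\Theta(\eta)\bar{\varphi}$, and then pulling a common factor $\Theta(\eta)$ out of numerator and denominator, identifies $\bar{F}$ and $\bar{G}$ termwise. The principal obstacle is the bookkeeping: each of the four summands generates several interacting Leibniz contributions at second order, and collapsing the resulting sums into the compact form displayed requires careful tracking of the $Z$-combinations at the six shifted arguments $\eta,\,\eta\pm\gamma_{1},\,\eta\pm\gamma_{2}$. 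Beyond $\Theta'=Z\Theta$, Proposition~\ref{propjt}, and the half-period formulas of Section~\ref{sec4}, no new tools are required, and Theorem~\ref{Thdt} provides an abstract check that $u[2]$ indeed solves \eqref{n+1kp}.
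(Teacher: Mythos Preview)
Your proposal is correct and follows essentially the same route as the paper: the paper sets up \eqref{soldt2} from Theorem~\ref{Thdt} and Proposition~\ref{propjt}, simplifies $\varphi_{1},\varphi_{2}$ ``by the technique similar to the process of \eqref{theta1-theta4}--\eqref{sncndn}'' to obtain the Wronskian \eqref{bartau}, and then states the theorem without further detail. Your outline makes the final logarithmic-differentiation step more explicit than the paper does, but the strategy and ingredients are identical.
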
	
		
		The 3D and corresponding contour plots of second order breather wave \eqref{dt2snu2} on different spatial variables $x_1$, $x_2$ and $x_3$ with respect to the time variable $t$ under some given parameters are shown in Figure \ref{fig3} to demonstrate its propagation. Figure \ref{fig3}(a) illustrates the interaction of two bright breather waves on the periodic wave background, which is similar to that in Figure \ref{fig1}(a), while the two breather waves are characterized by different inverse widths $\bar{\mu}_j(j=1,\,2)$, propagation speeds $\bar{c}_j(j=1,\,2)$ and phase shifts $-2\gamma_j(j=1,\,2)$, respectively. Similarly, in Figure \ref{fig3}(b), the two bright breather waves on periodic wave background propagate along the $t$ direction with the same inverse width $\omega_2$, and different phase shifts $-2\gamma_j(j=1,\,2)$, and velocities are $\bar{c}_j/\bar{p}_{2,j}(j=1,\,2)$, respectively. The scenario depicted in Figure \ref{fig3}(c) bears a strong resemblance to that in Figure \ref{fig3}(b), and these parameters have different physical meanings in different contexts and are crucial for revealing the underlying physical mechanisms.
\begin{figure*}[h]
\includegraphics[width=180mm]{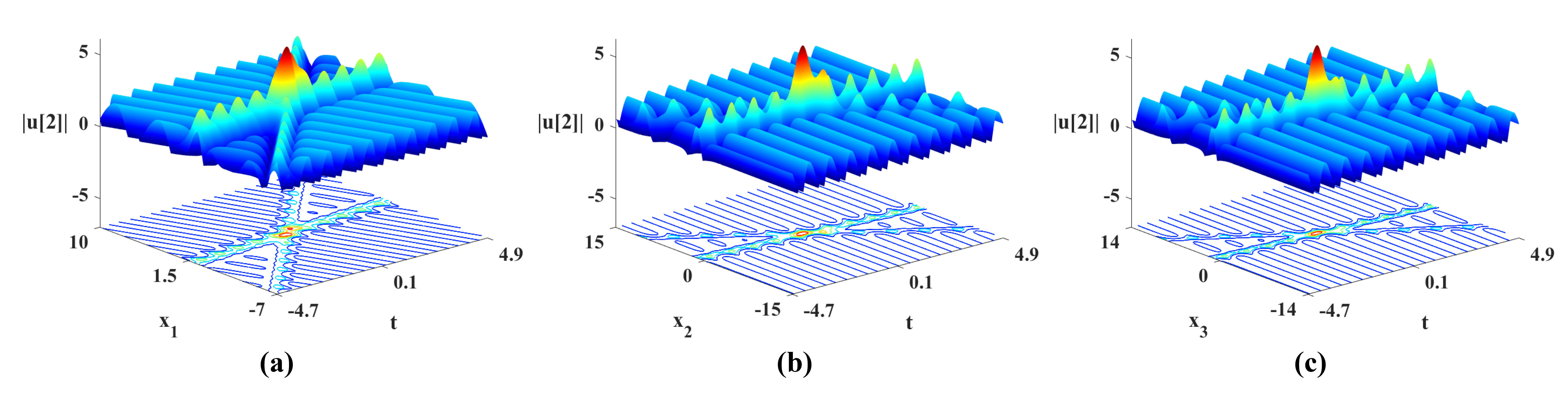}
\caption{Propagation profiles and corresponding contour plots of the nonlinear wave solution \eqref{dt2snu2} generated via the two-times Darboux transformation.
The parameters are chosen as $k=0.3$, $\alpha=\pi/3$, $\beta=1$, $\lambda_1=1.8$, $\lambda_2=1.4$, $\delta=1$, $\omega_2=\omega_3=1$, and $D=5$.
\textbf{(a)} Evolution in the $x_1$--$t$ plane with $x_2=x_3=1$;
\textbf{(b)} evolution in the $x_2$--$t$ plane with $x_1=x_3=1$;
\textbf{(c)} evolution in the $x_3$--$t$ plane with $x_1=x_2=1$.}
\label{fig3}
\end{figure*}
		
\begin{figure*}[h]
\includegraphics[width=180mm]{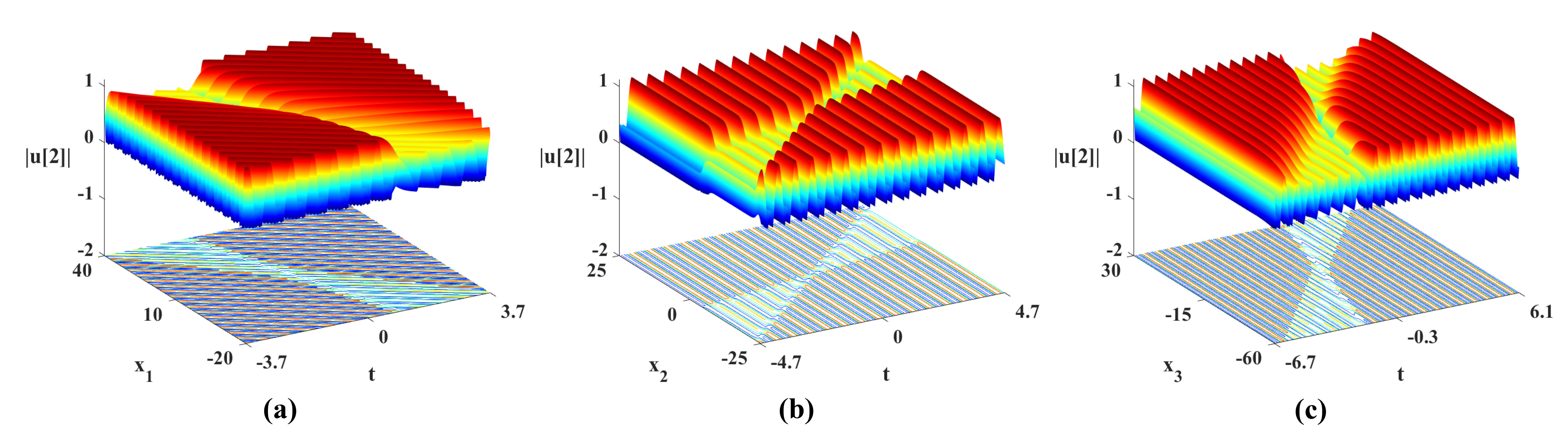}
\caption{Propagation profiles and corresponding contour plots of the nonlinear wave solution \eqref{dt2snu2d} generated via the two-times Darboux transformation.
The parameters are chosen as $k=0.6$, $\alpha=\pi/3$, $\beta=1$, $\lambda_1=-2.1$, $\lambda_2=-2.3$, $\delta=1$, $\omega_2=\omega_3=1$, and $D=5$.
\textbf{(a)} Evolution in the $x_1$--$t$ plane with $x_2=x_3=1$;
\textbf{(b)} evolution in the $x_2$--$t$ plane with $x_1=x_3=1$;
\textbf{(c)} evolution in the $x_3$--$t$ plane with $x_1=x_2=1$.}
\label{fig4}
\end{figure*}

		$\bullet$ \textbf{Case 2.} $\lambda\in [\lambda_{2+},\lambda_{3+}]$
		
		Simplifying two eigenfunctions $\varphi_{1}$ and $\varphi_{2}$ by the technique similar to the process of \eqref{halfperiodic1}-\eqref{sncndn1}, and using the properties of determinants and the relationship between the seed solution and the Jacobi theta function in Proposition \ref{propjt}, the function $\bar{\varphi}$  can be simplified as
		\begin{equation}\label{hattau}
			\begin{aligned}
				&\bar{\varphi}=(Z(\eta+\alpha_{2})-Z(\eta+\alpha_{1})-\hat{\mu}_{1}+\hat{\mu}_{2})\\
				&\quad\ \ \times\Theta(\eta+\alpha_{1})\Theta(\eta+\alpha_{2})\mathrm{exp}(-\hat{\kappa}_{1}-\hat{\kappa}_{2}) \\
				&\quad\ \ +(Z(\eta-\alpha_{2})-Z(\eta+\alpha_{1})+\hat{\mu}_{1}+\hat{\mu}_{2})\\
				&\quad\ \ \times\Theta(\eta+\alpha_{1})\Theta(\eta-\alpha_{2})\mathrm{exp}(-\hat{\kappa}_{1}+\hat{\kappa}_{2}) \\
				&\quad\ \ +(Z(\eta+\alpha_{2})-Z(\eta-\alpha_{1})+\hat{\mu}_{1}+\hat{\mu}_{2})\\
				&\quad\ \ \times\Theta(\eta-\alpha_{1})\Theta(\eta+\alpha_{2})\mathrm{exp}(\hat{\kappa}_{1}-\hat{\kappa}_{2}) \\
				&\quad\ \ +(Z(\eta-\alpha_{2})-Z(\eta-\alpha_{1})-\hat{\mu}_{1}+\hat{\mu}_{2})\\
				&\quad\ \ \times\Theta(\eta-\alpha_{1})\Theta(\eta-\alpha_{2})\mathrm{exp}(\hat{\kappa}_{1}+\hat{\kappa}_{2}),
			\end{aligned}
		\end{equation}
		in which, for $i=2,3,\cdots,n$, and $j=1,2$,
		\begin{align*}
			&\hat{\kappa}_j=\hat{\mu}_j\left(x_1+\hat{c}_jt+\sum_{i=2}^{n}\hat{p}_{ij}x_{i}\right),\\
			&\hat{c}_j=c-4\beta\left(\sqrt{\lambda_j+1+\delta \omega_{2}+k^{2}}\sqrt{\lambda_j+\delta \omega_{2}+k^{2}}\right.\\
			&\quad\ \ \ \left.\times\sqrt{\lambda_j+1+\delta \omega_{2}}+D\right)/\hat{\mu}_j,\\
			&\hat{\mu}_j=Z(\phi_{\alpha_j},k),\quad \alpha_j=F(\phi_{\alpha_j},k),\quad \hat{p}_{ij}=\omega_{i}/\hat{\mu}_j,
		\end{align*}
		for $i=2,3,\cdots,n$, and $j=1,2$. The variables $\beta, \delta, \omega_i, \sigma_i \in\mathbb{R}$ are chosen at random.
		The subsequent theorem presents the second order dark breather wave for the gKP equation \eqref{n+1kp}.
		
		\begin{Theorem}	\label{Thsnu2d}
			There exists an exact second order dark breather wave for the gKP equation \eqref{n+1kp} under $\lambda\in[\lambda_{2+},\lambda_{3+}]$ can be given as
			\begin{equation}\label{dt2snu2d}
				u[2]=2\left(\frac{E(k)}{K(k)}-1+\frac{\hat{F}}{\bar{\varphi}}-\frac{\hat{G}^2}{\bar{\varphi}^2}\right),
			\end{equation}
			where
			\begin{align*}
				&\hat{F}=\left(Z_{x_1}(\eta+\alpha_2)\left(3Z(\eta+\alpha_2)-3\bar{\mu}_2+Z(\eta+\alpha_1)\right.\right. \\
				&\quad\ \ \left.\left.+\bar{\mu}_1\right)-Z_{x_1}(\eta+\alpha_1)\left(3Z(\eta+\alpha_1)-3\bar{\mu}_1\right.\right. \\
				&\quad\ \ \left.\left.+Z(\eta+\alpha_2)-\bar{\mu}_2\right)+\left(Z(\eta+\alpha_2)-Z(\eta+\alpha_1)\right.\right. \\
				&\quad\ \ \left.\left.-\bar{\mu}_1+\bar{\mu}_2\right)\left(Z(\eta+\alpha_1)+Z(\eta+\alpha_2)-\bar{\mu}_1-\bar{\mu}_2\right)^2\right.\\
				&\quad\ \ \left.+Z_{x_1x_1}(\eta+\alpha_2)-Z_{x_1x_1}(\eta+\alpha_1)\right)\Theta(\eta+\alpha_{1})\\
				&\quad\ \ \times\Theta(\eta+\alpha_{2})\mathrm{exp}(-\bar{\kappa}_{1}-\bar{\kappa}_{2})\\
				&\quad\ \ +\left(Z_{x_1}(\eta-\alpha_2)\left(3Z(\eta-\alpha_2)-3\bar{\mu}_1+Z(\eta+\alpha_1)\right.\right. \\
				&\quad\ \ \left.\left.+\bar{\mu}_2\right)-Z_{x_1}(\eta+\alpha_1)\left(3Z(\eta+\alpha_1)+3\bar{\mu}_2\right.\right. \\
				&\quad\ \ \left.\left.+Z(\eta-\alpha_2)-\bar{\mu}_1\right)+\left(Z(\eta-\alpha_2)-Z(\eta+\alpha_1)\right.\right. \\
				&\quad\ \ \left.\left.+\bar{\mu}_1+\bar{\mu}_2\right)\left(Z(\eta+\alpha_1)+Z(\eta-\alpha_2)-\bar{\mu}_1+\bar{\mu}_2\right)^2\right.\\
				&\quad\ \ \left.+Z_{x_1x_1}(\eta-\alpha_2)-Z_{x_1x_1}(\eta+\alpha_1)\right)\Theta(\eta+\alpha_{1})\\
				&\quad\ \ \times\Theta(\eta-\alpha_{2})\mathrm{exp}(-\bar{\kappa}_{1}+\bar{\kappa}_{2})\\
				&\quad\ \ +\left(Z_{x_1}(\eta+\alpha_2)\left(3Z(\eta+\alpha_2)+3\bar{\mu}_1+Z(\eta-\alpha_1)\right.\right. \\
				&\quad\ \ \left.\left.-\bar{\mu}_2\right)-Z_{x_1}(\eta-\alpha_1)\left(3Z(\eta-\alpha_1)-3\bar{\mu}_2\right.\right. \\
				&\quad\ \ \left.\left.+Z(\eta+\alpha_2)+\bar{\mu}_1\right)+\left(Z(\eta+\alpha_2)-Z(\eta-\alpha_1)\right.\right. \\
				&\quad\ \ \left.\left.-\bar{\mu}_1+\bar{\mu}_2\right)\left(Z(\eta+\alpha_2)-Z(\eta-\alpha_1)+\bar{\mu}_1-\bar{\mu}_2\right)^2\right.\\
				&\quad\ \ \left.+Z_{x_1x_1}(\eta+\alpha_2)-Z_{x_1x_1}(\eta-\alpha_1)\right)\Theta(\eta-\alpha_{1})\\
				&\quad\ \ \times\Theta(\eta+\alpha_{2})\mathrm{exp}(\bar{\kappa}_{1}-\bar{\kappa}_{2})\\
				&\quad\ \ +\left(Z_{x_1}(\eta-\alpha_2)\left(3Z(\eta-\alpha_2)+3\bar{\mu}_2+Z(\eta+\alpha_1)\right.\right. \\
				&\quad\ \ \left.\left.+\bar{\mu}_1\right)-Z_{x_1}(\eta-\alpha_1)\left(3Z(\eta-\alpha_1)+3\bar{\mu}_1\right.\right. \\
				&\quad\ \ \left.\left.+Z(\eta-\alpha_2)+\bar{\mu}_2\right)+\left(Z(\eta-\alpha_2)-Z(\eta-\alpha_1)\right.\right. \\
				&\quad\ \ \left.\left.-\bar{\mu}_1+\bar{\mu}_2\right)\left(Z(\eta-\alpha_1)+Z(\eta-\alpha_2)+\bar{\mu}_1+\bar{\mu}_2\right)^2\right.\\
				&\quad\ \ \left.+Z_{x_1x_1}(\eta-\alpha_2)-Z_{x_1x_1}(\eta-\alpha_1)\right)\Theta(\eta-\alpha_{1})\\
				&\quad\ \ \times\Theta(\eta-\alpha_{2})\mathrm{exp}(\bar{\kappa}_{1}+\bar{\kappa}_{2}), \\
				\\
				&\hat{G}=\left(\left(Z(\eta+\alpha_2)-Z(\eta+\alpha_1)-\hat{\mu}_1+\hat{\mu}_2\right)\left(Z(\eta+\alpha_1)\right.\right. \\
				&\quad\ \ \left.\left.+Z(\eta+\alpha_2)-\hat{\mu}_1-\hat{\mu}_2\right)^2+Z_{x_1}(\eta+\alpha_2)\right.\\
				&\quad\ \ \left.-Z_{x_1}(\eta+\alpha_1)\right) \Theta(\eta+\alpha_{1})\Theta(\eta+\alpha_{2})\mathrm{exp}(-\hat{\kappa}_{1}-\hat{\kappa}_{2})	 \\   	
				&\quad\ \ +\left(\left(Z(\eta-\alpha_2)-Z(\eta+\alpha_1)+\hat{\mu}_1+\hat{\mu}_2\right)\left(Z(\eta+\alpha_1)\right.\right. \\
				&\quad\ \ \left.\left.+Z(\eta-\alpha_2)-\hat{\mu}_1+\hat{\mu}_2\right)^2+Z_{x_1}(\eta-\alpha_2)\right.\\
				&\quad\ \ \left.-Z_{x_1}(\eta+\alpha_1)\right) \Theta(\eta+\alpha_{1})\Theta(\eta-\alpha_{2})\mathrm{exp}(\hat{-\kappa}_{1}+\hat{\kappa}_{2})\\
				&\quad\ \ +\left(\left(Z(\eta+\alpha_2)-Z(\eta-\alpha_1)+\hat{\mu}_1+\hat{\mu}_2\right)\left(Z(\eta-\alpha_1)\right.\right. \\
				&\quad\ \ \left.\left.+Z(\eta+\alpha_2)+\hat{\mu}_1-\hat{\mu}_2\right)^2+Z_{x_1}(\eta+\alpha_2)\right.\\
				&\quad\ \ \left.-Z_{x_1}(\eta-\alpha_1)\right) \Theta(\eta-\alpha_{1})\Theta(\eta+\alpha_{2})\mathrm{exp}(\hat{\kappa}_{1}-\hat{\kappa}_{2})\\
				&\quad\ \ +\left(\left(Z(\eta-\alpha_2)-Z(\eta-\alpha_1)-\hat{\mu}_1+\hat{\mu}_2\right)\left(Z(\eta-\alpha_1)\right.\right. \\
				&\quad\ \ \left.\left.+Z(\eta-\alpha_2)+\hat{\mu}_1+\hat{\mu}_2\right)^2+Z_{x_1}(\eta-\alpha_2)\right.\\
				&\quad\ \ \left.-Z_{x_1}(\eta-\alpha_1)\right) \Theta(\eta-\alpha_{1})\Theta(\eta-\alpha_{2})e\mathrm{exp}(\hat{\kappa}_{1}+\hat{\kappa}_{2})
			\end{align*}
			other quantities have been given in \eqref{hattau}.
		\end{Theorem}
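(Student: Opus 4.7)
The plan is to apply the $N=2$ Darboux transformation from Theorem \ref{Thdt} to the seed \eqref{seedsn} with two eigenfunctions drawn from Theorem \ref{Th1}. The starting point is
\begin{equation*}
u[2]=-2k^{2}\mathrm{sn}^{2}(\eta,k)+2\partial_{x_{1}}^{2}\ln\mathrm{Wr}[\varphi_{1},\varphi_{2}].
\end{equation*}
Proposition \ref{propjt} rewrites the Jacobi elliptic background as $2(E(k)/K(k)-1)+2\partial_{x_{1}}^{2}\ln\Theta(\eta)$, after which everything reduces to simplifying $\bar{\varphi}=\mathrm{Wr}[\varphi_{1},\varphi_{2}]$ and computing its second logarithmic derivative in $x_{1}$.

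For each $\lambda_{j}\in[\lambda_{2+},\lambda_{3+}]$, I would take the general eigenfunction of Theorem \ref{Th1} with $\alpha_{j}=F(\phi_{\alpha_{j}},k)$ (case (i) of Proposition \ref{Propsin}), apply the half-periodic translation identities \eqref{halfperiodic1}, select the constants $C_{\pm}$ as in \eqref{c+c-1}, and substitute the closed-form Jacobi values \eqref{sncndn1}. Exactly as in the derivation leading to \eqref{tau2}, this collapses each $\varphi_{j}$ to the compact bilinear form
\begin{equation*}
\varphi_{j}=\Theta(\eta+\alpha_{j})\mathrm{exp}(-\hat{\kappa}_{j})+\Theta(\eta-\alpha_{j})\mathrm{exp}(\hat{\kappa}_{j}),
\end{equation*}
with $\hat{\mu}_{j},\hat{c}_{j},\hat{p}_{ij}$ determined by the constraint \eqref{omega1} on $\omega_{1}$ together with $c=4\beta(k^{2}+1)$.

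I would then expand $\bar{\varphi}=\varphi_{1}\varphi_{2,x_{1}}-\varphi_{2}\varphi_{1,x_{1}}$ into the four exponential blocks indexed by the sign pairs $(\pm,\pm)$ of $(\hat{\kappa}_{1},\hat{\kappa}_{2})$. Using $\partial_{x_{1}}\ln\Theta(\eta\pm\alpha_{j})=Z(\eta\pm\alpha_{j})$, the coefficient of each block becomes a linear combination of zeta-function differences of the form $Z(\eta\pm\alpha_{2})-Z(\eta\pm\alpha_{1})\mp\hat{\mu}_{1}\pm\hat{\mu}_{2}$, reproducing the representation \eqref{hattau} of $\bar{\varphi}$. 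Differentiating twice via $\partial_{x_{1}}^{2}\ln\bar{\varphi}=\bar{\varphi}_{x_{1}x_{1}}/\bar{\varphi}-(\bar{\varphi}_{x_{1}}/\bar{\varphi})^{2}$ produces the numerator $\hat{F}/\bar{\varphi}$ from the second-derivative contribution and the subtraction $\hat{G}^{2}/\bar{\varphi}^{2}$ from the squared first-derivative contribution, matching \eqref{dt2snu2d}.

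The main obstacle is the bookkeeping of the four-block expansion when the second derivative is taken: each block picks up pieces quadratic in the $Z$-factors, mixed $Z\cdot Z_{x_{1}}$ terms, and a $Z_{x_{1}x_{1}}$ term, and one must verify that they group exactly into the quadratic factors $(Z(\eta\pm\alpha_{1})+Z(\eta\mp\alpha_{2})\mp\hat{\mu}_{1}\mp\hat{\mu}_{2})^{2}$ displayed in $\hat{F}$. In addition, the background contribution $2\partial_{x_{1}}^{2}\ln\Theta(\eta)$ inherited from Proposition \ref{propjt} must be absorbed cleanly into the $Z_{x_{1}}$-differences in $\hat{F}$, so that no stray $Z_{x_{1}}(\eta)$ survives outside the quotient. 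I would verify this absorption by writing $Z(\eta)=\partial_{x_{1}}\ln\Theta(\eta)$ and using the Jacobi zeta-function addition formula together with the parity relations \eqref{parity}; the residual constant is then exactly $2(E(k)/K(k)-1)$. Apart from this combinatorial simplification, the remaining steps are direct computation parallel to Theorems \ref{Thsnu1d} and \ref{Thsnu2}.
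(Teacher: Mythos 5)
Your proposal follows exactly the route the paper takes: apply the two-fold DT \eqref{DTn2} to the seed \eqref{seedsn}, rewrite the background via Proposition \ref{propjt}, collapse each eigenfunction to the bilinear theta form using the case-(i) half-period translations \eqref{halfperiodic1}--\eqref{sncndn1} with the constants \eqref{c+c-1}, expand the Wronskian into the four exponential blocks of \eqref{hattau}, and take the second logarithmic derivative to produce $\hat{F}/\bar{\varphi}-\hat{G}^{2}/\bar{\varphi}^{2}$. The paper itself supplies no more than this outline (it simply cites the "technique similar to the process of \eqref{halfperiodic1}--\eqref{sncndn1}" and direct computation), so your plan is a faithful, if anything slightly more explicit, version of its argument.
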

		
		Figure \ref{fig4} displays the evolution and corresponding contour plots for the second dark breather solution \eqref{dt2snu2d} across various spatial variables $x_1$, $x_2$, and $x_3$, in relation to the time variable $t$, for some specific parameter values. Figure \ref{fig4} (a) depicts the interaction between two dark breather waves on the periodic background, distinguished by different inverse widths $\hat{\mu}_j (j=1,\,2)$, propagation speeds $\hat{c}_j (j=1,\,2)$, and phase shifts $-2\alpha_j (j=1,\,2)$. In the same way, the interaction of two dark breather waves with the same  inverse spatial width $\omega_2$ and different phase shifts $-2\alpha_j (j=1,\,2)$ and velocities being $\hat{c}_j/\hat{p}_{2,j}~ (j=1,\,2)$ on the periodic wave background is shown in Figure \ref{fig4}(b). The situation illustrated in Figure  \ref{fig4} (c) closely mirrors that in Figure \ref{fig4} (b).

		\section {Degenerate solutions}
		\label{sec6}
		
		In this section, we discuss the degenerate cases of the periodic wave background, that is the Jacobi elliptic function with a modulus $k$ approaching either $0$ or $1$, respectively. To ensure that the solutions are real after degeneracy, we consider the breather waves \eqref{dt1snu1} and \eqref{dt2snu2}, and $\lambda\in [-k^2-\delta \omega_2,+\infty)$ in both cases.
		
		$\bullet$ \textbf{Case 1*.} $k=0$
		
		In this case, we have $E(k)/K(k)=1$, $\Theta(\eta)=1$, from which the $\varphi$ function \eqref{tau1} can be transformed into
		\begin{equation}
			\varphi=\mathrm{exp}(\kappa_1)+\mathrm{exp}(-\kappa_1)=2\cosh(\kappa_1),
		\end{equation}
		in which,
		\begin{align*}
			&\kappa_1=\mu_1\left(x_1+c_1t+\sum_{i=2}^{n}p_{i}x_{i}\right),\\
			&c_1=4\beta\left(\lambda+2+\delta\omega_2-\sum_{i=2}^{n}\sigma_{i}p_{i}\right),\\
			&\mu_1=\sqrt{\lambda+\delta\omega_2},\quad p_{i}=\omega_i/\mu_1,
		\end{align*}
		for $i=2,3,\cdots,n$. Then, the first order breather wave \eqref{dt1snu1} can be degenerated into the standard one-soliton wave
		\begin{equation}\label{degedt1k0}
			u[1]\to 2\mu_{1}^2\mathrm{sech}^2(\hat{\kappa}_1).
		\end{equation}

		Similarly, the $\bar{\varphi}$ function \eqref{hattau} in this case can be degenerated as
		\begin{equation}
			\bar{\varphi}=2(-\bar{\mu}_{1}+\bar{\mu}_{2})\cosh(\bar{\kappa}_{1}+\bar{\kappa}_{2})+2(\bar{\mu}_{1}+\bar{\mu}_{2})\cosh(-\bar{\kappa}_{1}+\bar{\kappa}_{2}),
		\end{equation}
		in which,
		\begin{align*}
			&\bar{\kappa}_j=\bar{\mu}_j\left(x_1+c_jt+\sum_{i=2}^{n}\bar{p}_{ij}x_{i}\right),\\
			&\bar{c}_j=4\beta\left(\lambda_j+2+\delta\omega_2-\sum_{i=2}^{n}\sigma_{i}\bar{p}_{ij}\right),\\
			&\bar{\mu}_j=\sqrt{\lambda_j+\delta\omega_2},\quad \bar{p}_{ij}=\omega_i/\bar{\mu}_j\\
		\end{align*}
		for $i=2,3,\cdots,n$, and $j=1,2$. Then, the second breather wave \eqref{dt2snu2} degenerates into the two-soliton wave, written as
		\begin{equation}\label{degedt2k0}
			u[2]\to 2\frac{\bar{F}_1}{\bar{G}_1},
		\end{equation}
		in which,
		\begin{align*}
			&\bar{F}_1=-2(\bar{\mu}_{1}^{2}-\bar{\mu}_{2}^{2})(\bar{\mu}_{1}^{2}\cosh(2\bar{\kappa}_{2})+\bar{\mu}_{2}^{2}\cosh(2\bar{\kappa}_{1})   \\
			&\quad\quad~-(\bar{\mu}_{1}+\bar{\mu}_{2})^{2}),  \\
			&\bar{G}_1=((-\bar{\mu}_{1}+\bar{\mu}_{2})\cosh(\bar{\kappa}_{1}+\bar{\kappa}_{2})+(\bar{\mu}_{1}+\bar{\mu}_{2})\\
			&\quad\quad~ \times\cosh(-\bar{\kappa}_{1}+\bar{\kappa}_{2}))^2,
		\end{align*}
		for every $\lambda\in [-\delta \omega_2,+\infty)$.
		
		Figure \ref{fig5} illustrates the 3D and density propagation for the degenerated solutions of the first and second breather waves \eqref{degedt1k0} and \eqref{degedt2k0} in the $x_1$-$t$ space under some give parameters. In the asymptotic limit $k\to0$, the periodic wave background degenerates to the plane background and the corresponding breather solutions recovers to soliton solutions. Figure \ref{fig5}(a) illustrates a standard one-soliton solution with expression \eqref{degedt1k0} where $2\mu_1$ is the amplitude, $\mu_1$ is the reciprocal of the width, and $c_1$ is the wave speed. Figure \ref{fig5}(b) illustrates the evolution of two soliton solution.
		
\begin{figure*}[h]
\centering
\includegraphics[width=160mm]{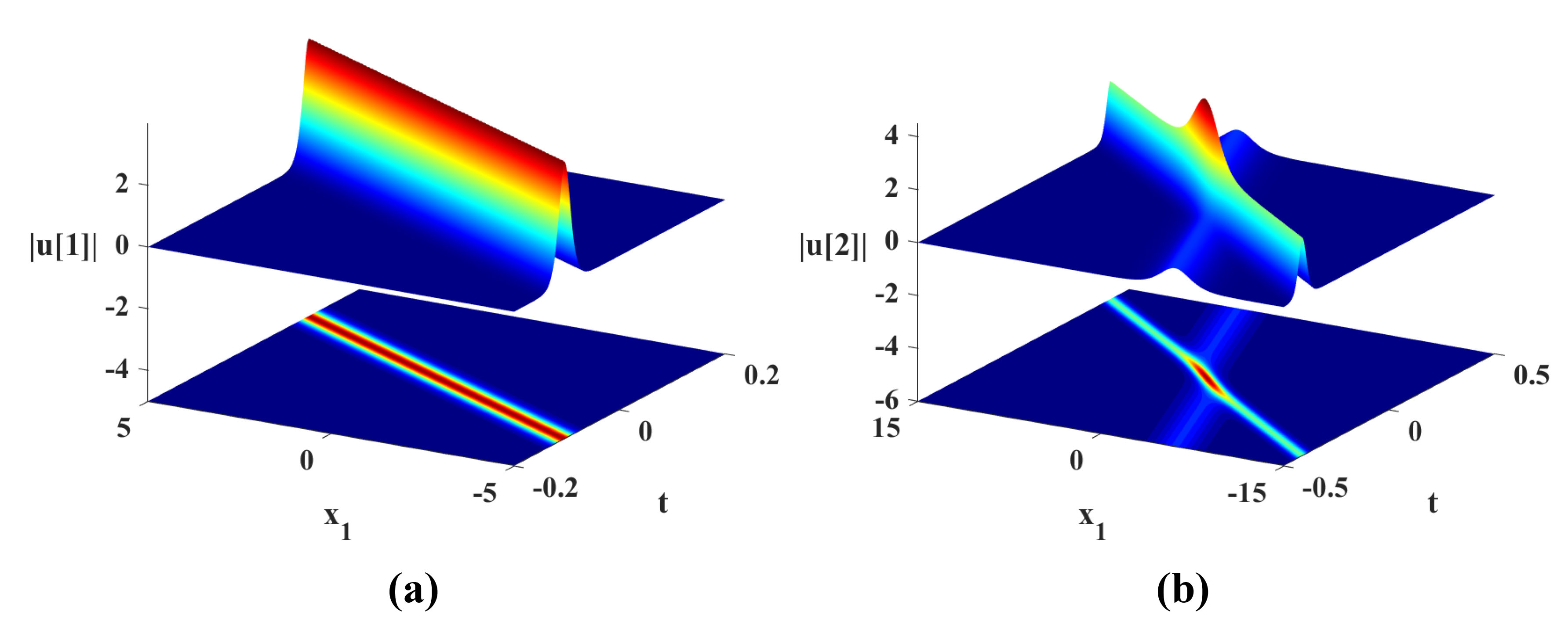}
\caption{Propagation and density plots for the degenerate solutions of nonlinear wave solutions \eqref{dt1snu1} and \eqref{dt2snu2} based on $k=0$, \textbf{(a)} Ploting from solution \eqref{degedt1k0}; \textbf{(b)} Ploting from solution \eqref{degedt2k0}. \label{fig5}}
\end{figure*}
		
\begin{figure*}[h]
\centering
\includegraphics[width=160mm]{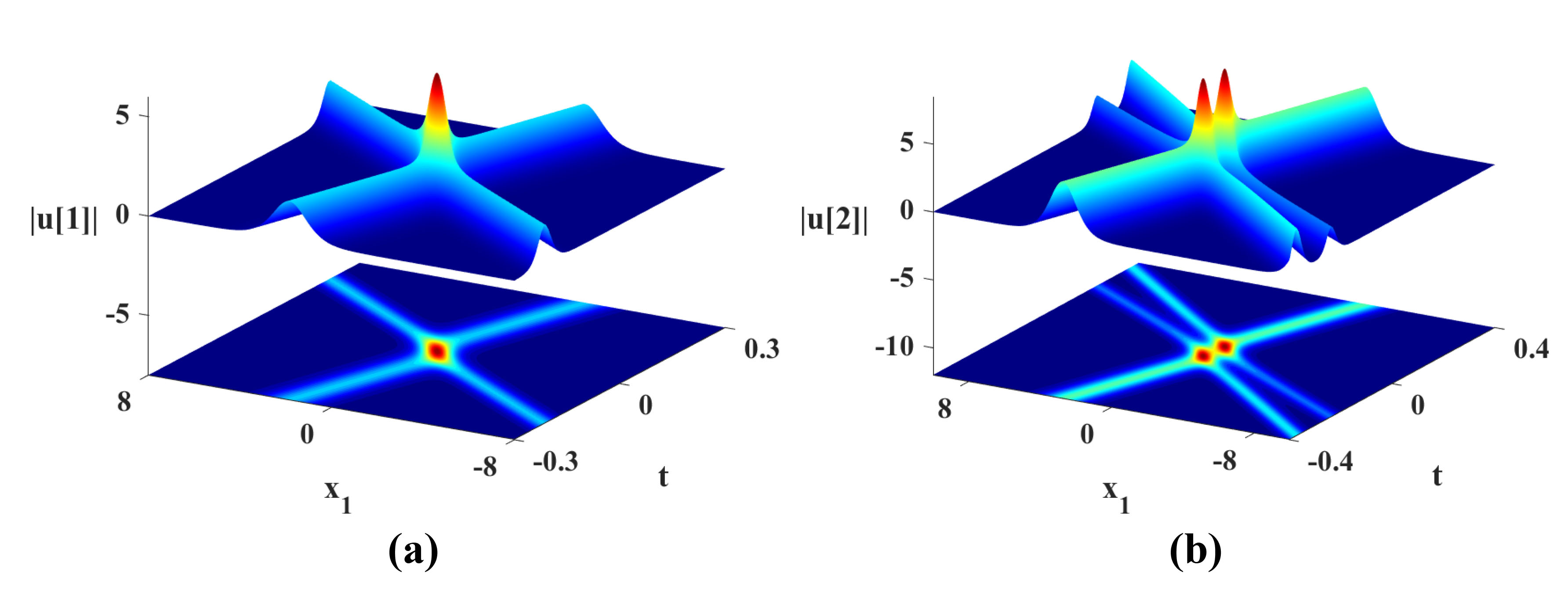}
\caption{Propagation and density plots for the degenerate solutions of nonlinear wave solutions \eqref{dt1snu1} and \eqref{dt2snu2} based on $k=1$, \textbf{(a)} Ploting from solution \eqref{degedt1k1}; \textbf{(b)} Ploting from solution \eqref{degedt2k1}. \label{fig6}}
\end{figure*}
		
		$\bullet$ \textbf{Case 2*.}  $k=1$
		
		In this case, $E(k)/K(k)=0$, $F(\phi_\gamma,k)\to\infty$ can be derived. Based on the translation invariance of the (n+1)-dimensions gKP equation \eqref{n+1kp}, the half-period translation of \eqref{dt1snu1} with the transformation $\eta=\eta-K(k)$ is defined to ensure the standardization of degradation solution. On account of the fact that $\gamma =F(\phi_\gamma,k)$, for each $\lambda\in [-1-\delta \omega_2,+\infty)$, $\chi_1$ can be defined according to its property \cite{HV-1969} as
		\begin{equation}\label{chi}
			\begin{aligned}
				\chi_1&=\lim_{k\to1}\left(K(k)-F(\phi_\gamma,k)\right)\\
				&=\frac12\ln\left(\frac{\sqrt{\lambda+2+\delta\omega_2}+1}{\sqrt{\lambda+2+\delta\omega_2}-1}\right).
			\end{aligned}
		\end{equation}
		At the same time, employing the Poisson summation equation \cite{JPB-1982}, if $k=1$,
		\begin{equation}\label{theta-cosh}
			\Theta(x)=\sqrt{\frac{-2k^{\prime}\ln k^{\prime}}{\pi}}\cosh(x),
		\end{equation}
		and the $\varphi$ function \eqref{tau1} can be transformed as
		\begin{equation}
			\begin{aligned}
				&\varphi=\sqrt{\frac{-2k^{\prime}\ln k^{\prime}}{\pi}}\left(\cosh(\eta-\chi_1)\mathrm{exp}(\bar{\kappa_1})\right.\\
				&\quad\quad\left.+\cosh(\eta+\chi_1)\mathrm{exp}(-\bar{\kappa_1})\right),
			\end{aligned}
		\end{equation}
		where for $i=2,3,\cdots,n$,
		\begin{align*}
			&\kappa_1=\mu_1\left(x_1+c_1t+\sum_{i=2}^{n}p_{i}x_{i}\right),\\
			&c_1=4\beta\left(\lambda+4+\delta\omega_2-\sum_{i=2}^{n}\sigma_{i}p_{i}\right),\\
			&\mu_1=\sqrt{\lambda+2+\delta\omega_2},\quad p_{i}=\omega_i/\mu_1,\quad \eta=x_1+8\beta t,
		\end{align*}
		Then, the first order breather wave \eqref{dt1snu1} degenerates into the two-soliton wave
		\begin{equation}\label{degedt1k1}
			u[1]\to 2\frac{F_3}{G_3},
		\end{equation}
		in which,
		\begin{align*}
			&F_3=\mathrm{exp}(2\chi_1)(1-\mu_1)^2+\mathrm{exp}(-2\chi_1)(1+\mu_1)^2 \\
			&\quad\quad~+2\cosh(2\bar{\kappa}_1)+2\mu_1^2\cosh(2\eta), \\
			&G_3=(\cosh(\eta-\chi_1)\mathrm{exp}(\bar{\kappa}_1)+\cosh(\eta+\chi_1) \\
			&\quad\quad~ \times \mathrm{exp}(-\bar{\kappa}_1))^2,
		\end{align*}

		In the same manner, the $\bar{\varphi}$ function \eqref{bartau} can be simplified under the limit reductions \eqref{chi} and \eqref{theta-cosh} as
		\begin{equation}
			\begin{aligned}
				&\bar{\varphi}=\frac{-2k^{\prime}\ln k^{\prime}}{\pi}\left((-\bar{\mu}_{1}+\bar{\mu}_{2})(\mathrm{cosh(\eta-\chi_1)}\right.\\
				&\quad\ \ \left. \times \mathrm{cosh(\eta-\chi_2)}\mathrm{exp}(\bar{\kappa}_{1}+\bar{\kappa}_{2})+\mathrm{cosh(\eta+\chi_1)} \right.\\
				&\quad\ \ \left. \times \mathrm{cosh(\eta+\chi_2)}\mathrm{exp}(-\bar{\kappa}_{1}-\bar{\kappa}_{2}))+(\bar{\mu}_{1}+\bar{\mu}_{2}) \right.\\
				&\quad\ \ \left. \times (\mathrm{cosh(\eta+\chi_1)}\mathrm{cosh(\eta-\chi_2)}\mathrm{exp}(-\bar{\kappa}_{1}+\bar{\kappa}_{2}) \right.\\
				&\quad\ \ \left. +\mathrm{cosh(\eta-\chi_1)}\mathrm{cosh(\eta+\chi_2)}\mathrm{exp}(\bar{\kappa}_{1}-\bar{\kappa}_{2}))\right),
			\end{aligned}
		\end{equation}
		in which, for $i=2,3,\cdots,n$, and $j=1,2$,
		\begin{align*}
			&\bar{\kappa}_j=\bar{\mu}_j\left(x_1+c_jt+\sum_{i=2}^{n}\bar{p}_{ij}x_{i}\right),\\
			&\bar{c}_j=4\beta\left(\lambda_j+4+\delta\omega_2-\sum_{i=2}^{n}\sigma_{i}\bar{p}_{ij}\right),\\
			&\bar{\mu}_j=\sqrt{\lambda_j+2+\delta\omega_2},~ \bar{p}_{ij}=\omega_i/\bar{\mu}_j,~ \eta =x_1+8\beta t,
		\end{align*}
		from which the degenerated nonlinear wave solution for the second breather wave \eqref{dt2snu2} can be obtained as
		\begin{equation}\label{degedt2k1}
			u[2]\to 2\left(\frac{\bar{F}_2}{\bar{\varphi}}-\frac{\bar{G}_2^2}{\bar{\varphi}^2}-1\right),
		\end{equation}
		where,
		\begin{align*}
			&\bar{F}_2=\frac{-2k^{\prime}\ln k^{\prime}}{\pi}\left((-\bar{\mu}_{1}+\bar{\mu}_{2})\left(2\sinh(\chi_1-\eta) \right.\right. \\
			&\quad\ \ \left.\left.\times\sinh(\chi_2-\eta)-(\bar{\mu}_{1}+\bar{\mu}_{2})\left(\sinh(\chi_1-\eta)\right.\right.\right.\\
			&\quad\ \ \left.\left.\left. \times\cosh(\chi_2-\eta)+\sinh(\chi_2-\eta) \right.\right.\right.\\
			&\quad\ \ \left.\left.\left. \times\cosh(\chi_1-\eta)\right)+(\bar{\mu}_{1}^2+2\bar{\mu}_{1}\bar{\mu}_{2}+\bar{\mu}_{2}+2) \right.\right.\\
			&\quad\ \ \left.\left. \times\cosh(\chi_1-\eta)\cosh(\chi_2-\eta)\right)\mathrm{exp}(\bar{\kappa}_{1}+\bar{\kappa}_{2})\right.\\
			&\quad\ \ \left. +\frac{1}{2}(-\bar{\mu}_{1}+\bar{\mu}_{2})\left((\bar{\mu}_{1}+\bar{\mu}_{2})^2 \cosh(\chi_1-\chi_2)\right.\right.\\
			&\quad\ \ \left.\left. +(\bar{\mu}_{1}^2+2\bar{\mu}_{1}\bar{\mu}_{2}+\bar{\mu}_{2}+4) \cosh(2\eta+\chi_1+\chi_2)\right.\right.\\
			&\quad\ \ \left.\left. -4(\bar{\mu}_{1}+\bar{\mu}_{2})\sinh(2\eta+\chi_1+\chi_2) \right) \mathrm{exp}(-\bar{\kappa}_{1}-\bar{\kappa}_{2})\right)
			\\
			&\quad\ \ +\frac{-k^{\prime}\ln k^{\prime}}{\pi}\left((\bar{\mu}_{1}+\bar{\mu}_{2})\left((\bar{\mu}_{1}-\bar{\mu}_{2})^2 \right.\right.\\
			&\quad\ \ \left.\left.\times \cosh(\chi_1+\chi_2)+(\bar{\mu}_{1}^2-2\bar{\mu}_{1}\bar{\mu}_{2}+\bar{\mu}_{2}+4) \right.\right.\\
			&\quad\ \ \left.\left.\times \cosh(2\eta+\chi_1-\chi_2)+4(-\bar{\mu}_{1}+\bar{\mu}_{2})\right.\right.\\
			&\quad\ \ \left.\left.\times \sinh(2\eta+\chi_1-\chi_2)\right)\mathrm{exp}(-\bar{\kappa}_{1}+\bar{\kappa}_{2}) \right.\\
			&\quad\ \ \left.+(\bar{\mu}_{1}+\bar{\mu}_{2})\left((\bar{\mu}_{1}-\bar{\mu}_{2})^2 \cosh(\chi_1+\chi_2)\right.\right.\\
			&\quad\ \ \left.\left. +(\bar{\mu}_{1}^2-2\bar{\mu}_{1}\bar{\mu}_{2}+\bar{\mu}_{2}+4)\cosh(-2\eta+\chi_1-\chi_2) \right.\right.\\
			&\quad\ \ \left.\left. +4(-\bar{\mu}_{1}+\bar{\mu}_{2})\sinh(-2\eta+\chi_1-\chi_2)\right)\right.\\
			&\quad\ \ \left.\times \mathrm{exp}(-\bar{\kappa}_{1}+\bar{\kappa}_{2})\right),\\
			\\
			&\bar{G}_2=\frac{-2k^{\prime}\ln k^{\prime}}{\pi}\left(\frac{1}{2}(-\bar{\mu}_{1}+\bar{\mu}_{2})\left((\bar{\mu}_{1}+\bar{\mu}_{2})\right.\right.\\
			&\quad\ \ \left.\left. \times\cosh(\chi_1-\chi_2)+(\bar{\mu}_{1}+\bar{\mu}_{2})\cosh(\chi_1+\chi_2-2\eta)\right.\right.\\
			&\quad\ \ \left.\left. -2\sinh(\chi_1+\chi_2-2\eta) \right)\mathrm{exp}(\bar{\kappa}_{1}+\bar{\kappa}_{2})\right.\\
			&\quad\ \ \left. +(-\bar{\mu}_{1}+\bar{\mu}_{2})\left(\sinh(\eta+\chi_1)\cosh(\eta+\chi_2) \right.\right.\\
			&\quad\ \ \left.\left. +\sinh(\eta+\chi_2)\cosh(\eta+\chi_1)-(\bar{\mu}_{1}+\bar{\mu}_{2}) \right.\right.\\
			&\quad\ \ \left.\left. \times\cosh(\eta+\chi_1)\cosh(\eta+\chi_2) \right) \mathrm{exp}(-\bar{\kappa}_{1}-\bar{\kappa}_{2})\right.\\
			&\quad\ \ \left.+(\bar{\mu}_{1}+\bar{\mu}_{2})\left(-\sinh(-\eta+\chi_2)\cosh(\eta+\chi_1)\right.\right.\\
			&\quad\ \ \left.\left. +\sinh(-\eta+\chi_1)\cosh(-\eta+\chi_2)-(\bar{\mu}_{1}-\bar{\mu}_{2}) \right.\right.\\
			&\quad\ \ \times\left.\left. \cosh(\eta+\chi_1)\cosh(-\eta+\chi_2) \right)\mathrm{exp}(-\bar{\kappa}_{1}+\bar{\kappa}_{2})\right.\\
			&\quad\ \ \left.+(\bar{\mu}_{1}+\bar{\mu}_{2})\left(-\sinh(-\eta+\chi_1)\cosh(\eta+\chi_2)\right.\right.\\
			&\quad\ \ \left.\left. +\sinh(\eta+\chi_2)\cosh(-\eta+\chi_1)+(\bar{\mu}_{1}-\bar{\mu}_{2}) \right.\right.\\
			&\quad\ \ \times\left.\left. \cosh(-\eta+\chi_1)\cosh(\eta+\chi_2) \right)\mathrm{exp}(\bar{\kappa}_{1}-\bar{\kappa}_{2})\right),
		\end{align*}
		for every $\lambda\in [1-\delta \omega_2,+\infty)$.
		
		Figure \ref{fig6} depicts the evolution for the degenerated solutions \eqref{degedt1k1} and \eqref{degedt2k1} in the $x_1$-$t$ space.
		In the asymptotic limit $k\to 1$, the periodic wave background reverts to one soliton, and the corresponding bright breather waves on the periodic background are also degenerated to the soliton waves. Figure \ref{fig6}(a) illustrates the evolution of two-soliton wave. Figure \ref{fig6}(b) depicts the interaction between one-soliton and two-soliton waves.

		\section{Conclusions and Discussions}
		\label{sec7}
		
		The study of nonlinear wave solutions on nonconstant backgrounds is one of the hot and challenging problems in the field of integrable systems. The existing research primarily concentrates on periodic backgrounds, with their related research involving elliptic functions remaining uncommon. In this paper, the solutions of the linear spectral problem, associated with the gKP equation, with the traveling Jacobi elliptic function as a potential are presented, from which some interesting nonlinear wave solutions on the related periodic background are derived, and the corresponding relationships between the spectra and various nonlinear wave solutions are established. At the same time, the evolution and dynamic characteristics for the nonlinear waves and their corresponding degenerated solutions are also discussed.
		
		The methodologies and findings proposed in this paper are also applicable to other integrable systems with Jacobi elliptic periodic function solutions. It is evident that there are still many issues that require further investigation. Such as, is there a more systematic and effective approach to construct a more general form of solution for the linear spectral problems? Can these nonlinear wave solutions on the Jacobi elliptic function periodic wave background be obtained by Sato theory and $\tau$ function in the framework of the Hirota bilinear method? Can numerical simulation and deep learning methods be used to construct nonlinear wave solutions and compare them with exact nonlinear wave solutions? These are the studies we are going to conduct in the near future.
		
		\section*{Acknowledgements}
		The author would like to express sincere thanks to Professor Chong Liu for insightful discussions and valuable suggestions on this work. This work is supported by the National Natural Science Foundation of China (No. 12475007).

		\section*{Conflict of Interest}
        The authors declare no competing interests.

        \section*{Data Availability Statement}
		The data that support the findings of this study are available from the corresponding author upon reasonable request.
		

\appendix
		\setcounter{equation}{0}
		\renewcommand{\theequation}{A.\arabic{equation}}
		\section*{Appendix}
		Jacobi elliptic functions emerge from reversing the first kind of elliptic integral,
		\begin{equation}
			u=\int_0^\phi\frac{d\alpha}{\sqrt{1-k^2\sin^2\alpha}},
		\end{equation}
		Using the notation $\phi=\mathrm{am}u$, we call this upper limit the amplitude. The quantity $u$ is called the argument, and its dependence on $\phi$ is written $u=\arg\phi$. The amplitude is an infinitely many-valued function of $u$ and has a period of $4Ki$. And the following Jacobi elliptic functions
		\begin{equation}
			\begin{aligned}
				&\mathrm{sn}(u,k)=\sin\phi=\sin\mathrm{am}u,\\
				&\mathrm{cn}(u,k)=\cos\phi=\cos\mathrm{am}u,\\
				&\mathrm{dn}(u,k)=\Delta\phi=\sqrt{1-k^2\sin^2\varphi}=\frac{\mathrm{d}\phi}{\mathrm{d} u}
			\end{aligned}
		\end{equation}
		are called sine-amplitude, cosine-amplitude, and delta amplitude, respectively. When the modulus $k$ tends to $0$ or $1$:
		\begin{equation}
			\begin{aligned}
				&\mathrm{sn}(u,0)=\sin u,\quad~\mathrm{sn}(u,1)=\mathrm{tanh} u,\\
				&\mathrm{cn}(u,0)=\cos u,\quad\mathrm{cn}(u,1)=\mathrm{sech} u,\\
				&\mathrm{dn}(u,0)=1,\qquad~~\mathrm{dn}(u,1)=\mathrm{sech} u.
			\end{aligned}
		\end{equation}
		Here are some concepts for elliptic integrals:\\
		$\bullet$ Elliptic integral of the first kind:
		\begin{equation}
			F(\phi,k)=\int_0^\phi\frac{\mathrm{d}\alpha}{\sqrt{1-k^2\sin^2\alpha}},
		\end{equation}
		$\bullet$ Elliptic integral of the second kind:
		\begin{equation}
			E(\phi,k)=\int_0^\phi\sqrt{1-k^2\sin^2\alpha}\mathrm{d}\alpha,
		\end{equation}
		$\bullet$ Complete elliptic integral:
		\begin{equation}
			K(k)=F\left(\frac\pi2,k\right),\quad E(k)=E\left(\frac{\pi}{2},k\right),
		\end{equation}
		$\bullet$ Jacobi Zeta function:
		\begin{equation}
			Z(\phi,k)=E(\phi,k)-\frac{E(k)}{K(k)}F(\phi,k),
		\end{equation}
		$\bullet$ Jacobi theta function:
		\begin{equation}
			\begin{aligned}
				&\theta_1(u)=2\sum_{n=1}^\infty(-1)^{n+1}q^{(n-\frac{1}{2})^2}\sin(2n-1)u,\\
				&\theta_2(u)=2\sum_{n=1}^\infty q^{(n-\frac{1}{2})^2}\cos(2n-1)u,\\
				&\theta_3(u)=1+2\sum_{n=1}^\infty q^{n^2}\cos2nu,\\
				&\theta_{4}(u)=1+2\sum_{n=1}^{\infty}(-1)^{n}q^{n^{2}}\cos2nu,
			\end{aligned}
		\end{equation}
		and
		\begin{equation}\label{appHtheta}
			\begin{aligned}
				&H(x)=\theta_1\left(\frac{\pi x}{2K(k)}\right),\quad H_1(x)=\theta_2\left(\frac{\pi x}{2K(k)}\right),\\
				&\Theta(x)=\theta_4\left(\frac{\pi x}{2K(k)}\right),\quad\Theta_1(x)=\theta_3\left(\frac{\pi x}{2K(k)}\right).\\
			\end{aligned}
		\end{equation}
\end{sloppypar}		
\end{document}